\newif\ifams
\def\citecolor{blue}
\title{The \iq of \\\cdlatt{s}}
\begin{document}
\ifams
\title{The involutive quantaloid of completely distributive lattices}
%\titlerunning{The \iq of \cdlatt{s}}%
\author{Luigi
  Santocanale}%
%\authorrunning{L. Santocanale}%
\address{%
  Laboratoire d'Informatique et des Syst\`emes, \\
  UMR 7020, Aix-Marseille
  Universit\'e, CNRS }
\email{luigi.santocanale@lis-lab.fr}
\thanks{Partially supported
  by the ``LIA LYSM AMU CNRS ECM INdAM'', by the ``LIA LIRCO'', and
  by the ANR project TICAMORE}

%%% Local Variables:
%%% mode: latex
%%% TeX-master: "0"
%%% End:

\else
\input{llncsFrontMatter}
\fi
\maketitle

\begin{abstract}
  Let $L$ be a complete lattice and let $\QL$ be the unital quantale of
\jc endo-functions of $L$. We prove that $\QL$ has at most two cyclic
elements, and that if it has a non-trivial cyclic element, then $L$ is
\cd and $\QL$ is involutive (that is, non-commutative cyclic
$\star$-autonomous). If this is the case, then the dual tensor
operation corresponds, via Raney's transforms, to composition in the
(dual) quantale of \mc endo-functions of $L$.

Let $\SLatt$ be the category of sup-lattices and \jc functions and let
$\CDLatt$ be the full subcategory of $\SLatt$ whose objects are the
\cdlatt{s}.  We argue that $\CDLatt$ is itself an involutive
quantaloid, thus it is the largest full-subcategory of \SLatt with
this property. Since $\CDLatt$ is closed under the monoidal operations
of $\SLatt$, 
we also argue that if $\QL$ is involutive, then $\QL$ is \cd as well;
consequently, any lattice embedding into an involutive quantale of the
form $\QL$ has, as its domain, a distributive lattice.

\end{abstract}
\section{Introduction}

Let $C$ be a finite chain or the unit interval of the reals. In a
series of recent works \cite{2018-RAMICS,CW,2019-WORDS} we argued
that the unital quantale structure of $\QC$, the set of \jc functions
from $C$ to itself, plays a fundamental role to solve more complex
combinatorial and geometrical problems arising in Computer Science.
In \cite{2018-RAMICS,CW} we formulated an order theoretic approach to
the problem of constructing discrete approximations of curves in
higher dimensional unit cubes.  On the side of combinatorics, the
results in \cite{2019-WORDS} yield bijective proofs of counting
results (that is, bijections, through which these results can easily
be established) for idempotent monotone endo-functions of a finite
chain \cite{Howie71,LaradjiUmar2006} and a new algebraic
interpretation of well-known combinatorial identities
\cite{Worpit2011}.

The quantales $\QC$, $C$ a finite chain or $[0,1] \subseteq \R$, are
\emph{involutive}---or, using another possible naming,
\emph{non-commutative cyclic $\star$-autonomous}. The involution is
used in the mentioned works, even if is not clear to what extent it is
necessary. It was left open in these works whether there are other
complete chains $C$ such that $\QC$ is involutive and, at its
inception, the aim of this research was to answer this question.
Let us use $\QL$ for the unital quantale of \jc endo-functions of a
complete lattice $L$.
Recalling that \iqe structures on a given quantale are determined by
the cyclic dualizing elements and that complete chains are \cd, the
following statement from the monograph \cite{EGHK2018} shows that
$\QC$ is involutive for each complete chain $C$.
\begin{prop}{\propno in \cite{EGHK2018}}
  Let $L$ be a complete lattice and let $o_{L}$ be the \jc self-mapping
  on $L$ defined by $o_{L}(x) \eqdef \bigvee_{\Cond{z}{x}} z$, for
  $x \in L$. Then the following assertions are equivalent: (i) $o_{L}$
  is a dualizing element of the quantale $\Q(L)$, (ii) $L$ is \cd.
\end{prop}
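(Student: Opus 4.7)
The two directions are proved separately.

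For (ii) $\Rightarrow$ (i), assume $L$ is \cd. A key preliminary fact, valid in any complete lattice, is that the totally-below relation satisfies $y \ll x \iff x \not\leq o_L(y)$: the set $\{w : y \not\leq w\}$ always witnesses $y \not\ll x$ when $x \leq o_L(y)$, and conversely any witnessing cover has join bounded by $o_L(y)$. Combined with the \cd characterization $x = \bigvee\{y : y \ll x\}$, this yields $x = \bigvee\{y : x \not\leq o_L(y)\}$ for every $x \in L$. I would then express the residual $f \backslash o_L$ in $\QL$ explicitly in terms of the action of $o_L$, and verify the dualizing identities $o_L / (f \backslash o_L) = f$ and $(o_L / f) \backslash o_L = f$. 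The core step is an exchange of joins, which is valid precisely because $L$ is \cd.

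For (i) $\Rightarrow$ (ii), assume $o_L$ is dualizing in $\QL$, so that $f \mapsto f \backslash o_L$ is an antitone bijection on $\QL$ with inverse $f \mapsto o_L / f$. The plan is to evaluate this bijection on carefully chosen \jc test functions to extract, for every $x \in L$, the identity $x = \bigvee\{y : x \not\leq o_L(y)\}$. One natural family of test functions is generated algebraically from $o_L$ and the unit $\mathrm{id}_L$ by joins and compositions within $\QL$; applying the double residual to elements of this family and reading off their action at specific points yields the desired pointwise identity. By the preliminary fact above, this identity is equivalent to $x = \bigvee\{y : y \ll x\}$, so $L$ is \cd.

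The main obstacle I expect is the direction (i) $\Rightarrow$ (ii): simple pointwise test functions such as constants above $\bot$ are generally not \jc, so the test family must be constructed with care, possibly using the algebraic structure of $\QL$ itself (joins and compositions involving $o_L$ and $\mathrm{id}_L$) to generate enough \jc functions to probe each $x \in L$ individually. The direction (ii) $\Rightarrow$ (i) is by contrast rather concrete: once the preliminary fact is in hand, the verification reduces to a direct computation that exploits the \cd identity as the crucial instance of join-exchange.
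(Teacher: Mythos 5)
Your preliminary observation is correct and genuinely useful: $y$ is totally below $x$ iff $x \not\leq o_{L}(y)$, so Raney's characterization can be read as $x = \bigvee\set{y \mid x \not\leq o_{L}(y)}$. Beyond that, however, the proposal is a plan rather than a proof, and the two places where you defer the work are precisely where the content lies. For (ii)$\Rightarrow$(i) you say you ``would express the residual $f \below o_{L}$ explicitly'' and then check the dualizing identities by ``an exchange of joins''; producing that expression is the whole point. The route taken in this paper is: residuals in $\QL$ are $g \below h = \int(\ra{g} \circ h)$ with $\int$ the greatest \jc minorant; for $h = o_{L}$ this computes to the Raney transform $f \below o_{L} = \joinof{\ra{f}}$ (Lemma~\ref{lemma:interior}), and the dualizing identity reduces to the involution law $\joinof{\meetof{f}} = f$ of Lemma~\ref{lem:inverse}, whose proof uses complete distributivity in the specific form $\joinof{\omega_{L}} = id_{L}$ rather than a generic join exchange. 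Without the formula for $f \below o_{L}$ and the identity it must satisfy, this direction is not established.

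The gap in (i)$\Rightarrow$(ii) is more serious, and you flag it yourself: you have no concrete test functions. Your worry that ``constants above $\bot$ are not \jc'' dissolves as soon as you modify them at $\bot$: the maps $c_{x}$ (sending $\bot$ to $\bot$ and everything else to $x$) and $a_{x}$ (sending $t$ to $\bot$ if $t \leq x$ and to $\top$ otherwise) from equation~\eqref{eq:defCA} are \jc, and these---not a family ``generated algebraically from $o_{L}$ and $id_{L}$,'' which consists only of uniform maps with no evident way to probe an individual $x \in L$---are what is needed. Comparing $c_{y} \circ a_{x} \leq o_{L}$ with $a_{x} \circ c_{y} \leq o_{L}$ as in Theorem~\ref{thm:direct} (which in fact needs only cyclicity of $o_{L}$ together with $o_{L} \neq c_{\top}$, a weaker hypothesis than dualizing once one checks that a dualizing $o_{L}$ cannot be $c_{\top}$ in a nontrivial lattice, cf.\ Lemma~\ref{lemma:cTopNotDualizing}) yields $x = \bigwedge_{t \not\leq x} o_{L}(t)$, which gives complete distributivity by Theorem~\ref{thm:raney}. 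As written, your plan for this direction is a conjecture that some family of test functions exists, not an argument; supplying $c_{x}$ and $a_{x}$ and carrying out the two evaluations is the missing step.
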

This proposition also covers another important example studied in the
literature. Let $\D(P)$ be the perfect \cdlatt of downsets of a poset
$P$. According to the proposition, $\QDP$ is involutive, a fact that
can also be inferred via the isomorphism with the residuated lattice
of weakening relations on $P$, known to be involutive, see
\cite{Rosenthal1992,RSDLL,Jipsen2017}.

We strengthen here the above statement in many ways.  Firstly, we
observe that the quantale $\QL$ has at most two cyclic elements and
that cyclicity of $o_{L}$ is almost sufficient for $\QL$ to be
involutive:
\begin{thm}
  If $c \in \QL$ is cyclic, then either $c$ is the top element of
  $\QL$ or $c = o_{L}$.  Moreover, if $o_{L}$ is cyclic and not equal
  to the top element of $\QL$, then $L$ is \cd (and therefore $\QL$ is
  involutive, as from Propoosition~\propno of \cite{EGHK2018}).
\end{thm}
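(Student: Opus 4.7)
The strategy for the first claim is to test the cyclicity condition $c/f = f\backslash c$ against two specific families of \jc endomaps in $\QL$. The first consists of the constant maps $T_y$ defined by $T_y(\bot) = \bot$ and $T_y(x) = y$ for $x \neq \bot$, one per $y \in L$. The second consists of the two-valued maps $f_y$ defined by $f_y(x) = \top$ when $x \not\leq y$ and $f_y(x) = \bot$ otherwise; these lie in $\QL$ because the upset $\{x : x \not\leq y\}$ is join-prime, i.e., $\bigvee S \not\leq y$ iff some $s \in S$ satisfies $s \not\leq y$. A direct calculation gives $T_a \circ T_b = T_a$ whenever $b \neq \bot$; applying cyclicity to the pair $(T_a, T_b)$ therefore yields $T_a \leq c \iff T_b \leq c$ for all $a, b \neq \bot$, which splits into two cases: either $T_a \leq c$ for every $a$ (so $c = \top_\QL$) or $T_a \not\leq c$ for every $a \neq \bot$, equivalently $\bigwedge_{x \neq \bot} c(x) = \bot$.

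In the non-trivial case, to prove $c = o_L$, I would compute the four residuals $c/T_y$, $T_y\backslash c$, $c/f_y$, $f_y\backslash c$ explicitly. Short calculations identify $c/T_y$ with $f_y$ (using join-primeness of $\{x \not\leq y\}$) and $c/f_y$ with the constant $T_m$ for $m := \bigwedge_{x \not\leq y} c(x)$; the residuals $T_y\backslash c$ and $f_y\backslash c$ are obtained as largest \jc functions satisfying certain support constraints, and cyclicity extracts two identities: (i) $x \not\leq y$ implies $y \leq c(x)$ for all $x, y$, hence $o_L \leq c$ pointwise; and (ii) $\bigwedge_{x \not\leq y} c(x) = y$ for every $y \in L$. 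The reverse inequality $c \leq o_L$ then falls out by substituting $y := o_L(x_0)$ into (ii): the tautology ``$x \not\leq o_L(x_0) \Rightarrow x_0 \leq x$'', immediate from the definition of $o_L$, together with monotonicity of $c$, shows that $c(x_0)$ is a lower bound of the family $\{c(x) : x \not\leq o_L(x_0)\}$, whence (ii) forces $c(x_0) \leq o_L(x_0)$.

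For the moreover clause, with $c = o_L$ now established, identities (i) and (ii) become properties of $o_L$ itself. By the cited Proposition~\propno of \cite{EGHK2018}, complete distributivity of $L$ is equivalent to $o_L$ being dualizing in $\QL$, so it suffices to upgrade cyclicity to the involutivity of the antitone operator $\neg f := o_L/f = f\backslash o_L$ on $\Q(L)$. The base computations $\neg T_y = f_y$ and $\neg f_y = T_y$ establish $\neg\neg = \mathrm{id}$ on the two test families. The main obstacle I anticipate is the extension to an arbitrary \jc endomap: without assuming \emph{a priori} that $L$ is \cd, showing that every element of $\Q(L)$ lies in the $\neg\neg$-fixed sub-quantale generated by $\{T_y, f_y\}_{y \in L}$ amounts to extracting the full \cd law of $L$ from identity (ii). This is precisely where the hypothesis $o_L \neq \top_\QL$ (which fails, e.g., in $M_3$, where $o_L$ collapses to $\top_\QL$ despite $L$ not being \cd) carries its weight.
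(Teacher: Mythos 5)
Your proof of the first claim is correct and is essentially the paper's own argument: your $T_y$ and $f_y$ are the maps $c_y$ and $a_y$ of equation~\eqref{eq:defCA}, and testing the equivalence $g\circ f\le c\iff f\circ g\le c$ against these two families is exactly what the paper does (it packages the computation of $c_x\below h$ through the interior operator $\int$ and Lemma~\ref{lemma:intAlpha}, you compute the residuals by hand; both routes give $o_L\le c$ and, when $c\ne c_\top$, the identity $\bigwedge_{\COND{x}{t}}c(t)=x$, hence $c\le o_L$).

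For the ``moreover'' clause there is a gap as written, but it is one you have already closed without noticing. The identity you label (ii), namely $\bigwedge_{\COND{y}{x}}c(x)=y$ for all $y$, becomes, once $c=o_L$ is established, precisely $\bigwedge_{\COND{y}{t}}o_L(t)=y$ --- and this is \emph{verbatim} Raney's characterization of complete distributivity, Theorem~\ref{thm:raney} (Theorem~4 of \cite{Raney60}). So the second claim follows immediately from (ii) by citing that theorem; this is exactly how the paper concludes. Instead, you set out to prove that $o_L$ is \emph{dualizing} (so as to invoke Proposition~\propno of \cite{EGHK2018}), which is the hard direction of that proposition and strictly more than is needed; and you correctly flag that you cannot extend the computation $\neg\neg=\mathrm{id}$ from the test families $\{T_y,f_y\}$ to all of $\QL$ without already knowing that $L$ is \cd. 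That extension genuinely requires complete distributivity (it is the content of Section~\ref{sec:iqantaloid}, via the Raney transforms), so your proposed route is circular as a way of \emph{establishing} distributivity. Replace that final paragraph with an appeal to Theorem~\ref{thm:raney} applied to identity (ii), and the proof is complete.
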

An important consequence of the previous statement is that the
quantale $\QL$ can be made into an \iqe in a unique way:
\begin{thm}
  If the quantale $\QL$ 
  is involutive,
  then its dualizing cyclic element is the \jc function $o_{L}$.
\end{thm}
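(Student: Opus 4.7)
The plan is to combine the preceding theorem with a short residuation computation. By hypothesis, $\QL$ carries an involutive structure, so there is a cyclic dualizing element $d \in \QL$; the preceding theorem restricts the cyclic elements of $\QL$ to at most two values, namely the top $\top$ of $\QL$ and the function $o_{L}$. I would therefore split the analysis into the two cases $d = o_{L}$ and $d = \top$.

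The case $d = o_{L}$ is exactly the desired conclusion, so the real work is to exclude $d = \top$ outside of the degenerate situation. For this I would rely on the identities $\top / \top = \top$ and $\top \backslash \top = \top$, which hold in any unital quantale since $\top$ is an absorbing upper bound for both residuals. If $d = \top$ were dualizing, then for every $f \in \QL$ one would have
\begin{equation*}
  f \;=\; d / (f \backslash d) \;=\; \top / (f \backslash \top) \;=\; \top / \top \;=\; \top\,,
\end{equation*}
forcing $\QL$ to be a one-element quantale. But as soon as $L$ has at least two elements, the identity and the constant \jc function at $\bot_{L}$ are two distinct elements of $\QL$; hence triviality of $\QL$ forces $L$ to be the one-element lattice, in which case $o_{L} = \top$ holds automatically. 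In every case $d = o_{L}$, which gives the stated conclusion and also shows that the cyclic dualizing element is unique.

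The main obstacle is really the preceding theorem, which I am entitled to assume; the residuation step is entirely elementary. The only ancillary point worth double-checking is that triviality of $\QL$ entails triviality of $L$, and this is immediate from the existence of distinct \jc endofunctions whenever $|L| \geq 2$.
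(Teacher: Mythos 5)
Your proposal is correct and follows essentially the same route as the paper: invoke the two-cyclic-elements theorem and then rule out the top element $c_{\top}$ as a dualizing element, handling the trivial lattice separately. The paper's version of the exclusion step instantiates the dualizing identity at $f = c_{\bot}$ and uses the embedding $x \mapsto c_{x}$, whereas you collapse all of $\QL$ at once and then note $id_{L} \neq c_{\bot}$; these are the same argument in substance.
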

In the direction from $L$ to $\QL$, we observe that the local \iqe
structures on each \cdlatt fit together in a uniform way.  A
quantaloid is a category whose homsets are \clatt{s} and for which
composition distributes on both sides with suprema. As a quantale can
be considered as a one-object quantaloid, the notion of \iqe naturally
lifts to the multi-object context---so an \iqe is a one-object \iq.
\emph{\Iq{s}} are indeed the Girard quantaloids introduced in
\cite{Rosenthal1992}. The following statement, proved in this paper,
makes precise the intuition that the local \iqe structures are
uniform:
\begin{thm}
  The full subcategory of 
  the category of complete lattices
  and \jc functions
  whose objects are the \cdlatt{s} is an \iq.
\end{thm}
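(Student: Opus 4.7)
The plan is to extend the involutive-quantale structure available on each $\QL$ for $L$ a \cdlatt---supplied by Proposition~\propno of \cite{EGHK2018} and the theorems already established---to a quantaloid-wide involution on $\CDLatt$. First, $\SLatt$ is already a quantaloid: homsets $\SLatt(L, M)$ are \clatt{s} under the pointwise order, and composition is \jc in both arguments. Being a full subcategory, $\CDLatt$ inherits this quantaloid structure; what remains is to supply the involution and verify its axioms.

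For \cdlatt{s} $L, M$ and $f \in \SLatt(L, M)$, I would define
\[
f^{\partial} \eqdef \bigvee \set{g \in \SLatt(M, L) \mid f \circ g \leq o_{M}}\,.
\]
This is itself \jc by a standard residuation argument, and when $L = M$ it recovers the dualization in $\QL$ provided by the paper's earlier results. The key step is to establish the across-object cyclicity: for $f : L \to M$ and $g : M \to L$ both \jc,
\[
f \circ g \leq o_{M} \iff g \circ f \leq o_{L}\,.
\]
Since $M$ is \cd, the inequality $f \circ g \leq o_{M}$ is equivalent to the condition $p \not\leq f(g(p))$ for every \cjp $p \in M$ (unfolding the definition of $o_M$ and using that $M$ is generated by its \cjp elements); the analogous characterization holds for $g \circ f \leq o_{L}$. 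Given then a \cjp $q \in L$ with $q \leq g(f(q))$, decomposing $f(q) \in M$ as a join of \cjp elements below it and using \jc of $g$ produces a \cjp $p \in M$ with $p \leq f(q)$ and $q \leq g(p)$; monotonicity of $f$ applied to $q \leq g(p)$ then forces $p \leq f(q) \leq f(g(p))$, the required contradiction.

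From this cyclicity, the involutive law $(f^{\partial})^{\partial} = f$ and the contravariant composition law $(g \circ f)^{\partial} = f^{\partial} \circ g^{\partial}$ follow by routine quantaloid residuation arguments; in the case $L = M$ both reduce to properties of $o_{L}$ as the cyclic dualizer of $\QL$, which is part of the previously established structure. The principal obstacle is precisely the across-object cyclicity displayed above: the single-object instance is just the cyclicity of $o_{L}$, but the distinct-objects version forces one to simultaneously exploit the \cd hypothesis on both lattices, which the \cjp argument accomplishes. Once this is in hand, the family $\set{o_{L}}_{L \in \CDLatt}$ constitutes the cyclic-dualizing data, and $\CDLatt$ is revealed as a Girard (involutive) quantaloid.
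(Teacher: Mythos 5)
Your definition of the involution as the residual $f^{\star}\eqdef f\below o_{M}=\bigvee\set{g\mid f\circ g\leq o_{M}}$ agrees with the paper's $f^{\star_{L,M}}=\joinof{\ra{f}}=\int(\ra{f}\circ o_{M})$, and the overall strategy (exhibit $\set{o_{L}}_{L}$ as a cyclic dualizing family) is the right one. However, the step you yourself identify as the crux is carried out with an argument that is not available: a \cdlatt need not be join-generated by its \cjp elements. The paper makes this point explicitly: it calls a lattice with no \cjp elements \emph{smooth} and observes that $[0,1]$ is a smooth \cdlatt. For $L=M=[0,1]$ one has $o_{[0,1]}=id_{[0,1]}$, there are no \cjp elements at all, and yet $c_{\top}\circ c_{\top}=c_{\top}\not\leq o_{[0,1]}$; so the asserted equivalence between $f\circ g\leq o_{M}$ and ``$p\not\leq f(g(p))$ for every \cjp $p\in M$'' is false in general---it holds only for spatial \cdlatt{s} such as the $\DP$. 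To repair this step you must replace join-primes by the approximation data that Raney's Theorem~\ref{thm:raney} actually provides, namely $x=\bigvee_{\COND{x}{t}}\omega_{M}(t)$; this is precisely what the paper's transforms $\funJoinOf$, $\funMeetOf$ and Lemma~\ref{lemma:interior} package, and it is why the paper routes the whole verification through them rather than through \cjp elements.

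A second, independent gap is the claim that $(f^{\star})^{\star}=f$ follows ``by routine residuation arguments'' from cyclicity. Cyclicity alone never yields the dualizing property (in $\QL$ the top element $c_{\top}$ is cyclic but not dualizing), and for $L\neq M$ the identity $(f\below o_{M})\below o_{L}=f$ does not reduce to the one-object statement about $o_{L}$ as a cyclic dualizer of $\QL$. This is the substantive half of the verification: the paper proves it as Lemma~\ref{lem:inverse} ($f=\joinof{\meetof{f}}$, which uses complete distributivity of $L$ through $\joinof{\omega_{L}}=id_{L}$) combined with the identity $\la{\meetof{f}}=\joinof{\ra{f}}$ of Lemma~\ref{lemma:naturalOne}. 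Your proposal needs an explicit argument at this point as well.
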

The tools used in this paper rely on and emphasize Raney's
characterization of \cdlatt{s} \cite{Raney53,Raney60}.  A main remark
that we develop is that if $\QL$ is involutive, then the dual quantale
structure of $\QL$ 
arises from $\Q(L^{\partial})$, the quantale of \mc endo-functions of
$L$, via Raney's transforms (to be studied in
Section~\ref{sec:raney}).

Overall, this set of results yields an important clarification of the
algebra used in our previous works
\cite{2018-RAMICS,CW,2019-WORDS} 
and, more importantly, new characterizations of \cdlatt{s} adding up
to the existing ones, see e.g.
\cite{Raney53,Raney60,Lambrou83,HiggsRowe1989,Stubbe07}.  These
characterizations strongly rely on the algebra of quantales and \rl{s}
thus on relation algebra, in a wider sense.

An ideal goal of future research is to characterize the equational
theory of the \irl{s} of the form $\QL$.  For the moment being, we
observe that the units of the \iqe $\QL$, $L$ a \cdlatt, may be used
to characterize properties of $L$:
\begin{thm} 
  A \clatt is a chain if and only if the inclusion $0 \leq 1$ (in the
  language of \irl{s}) holds in $\QL$, i.e. if and only if $\QL$
  satisfies the mix law.  A \cdlatt has no \cjp elements if and only
  if the inclusion $1 \leq 0$ holds in $\QL$.
\end{thm}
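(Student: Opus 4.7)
The plan is to first identify the two units of the involutive quantale $\QL$: the monoidal unit $1$ is $\mathrm{id}_L$, and as established above the dualizing cyclic element $0$ must be $o_L$. Accordingly, the statements $0 \leq 1$ and $1 \leq 0$ in the language of \irl{s} become the pointwise assertions $o_L(x) \leq x$ for all $x \in L$, respectively $x \leq o_L(x)$ for all $x \in L$. From there both equivalences reduce to direct manipulations with the explicit formula $o_L(x) = \bigvee \{ z \in L : x \not\leq z \}$.

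For the chain equivalence, I would first observe that a chain $L$ is automatically \cd, so $\QL$ is involutive in that case; the inclusion $o_L \leq \mathrm{id}_L$ then follows because in a chain $\{ z : x \not\leq z \}$ coincides with $\{ z : z < x \}$, whose supremum is bounded above by $x$. For the converse, supposing $L$ is not a chain I would pick incomparable $a, b \in L$: then $b \in \{ z : a \not\leq z \}$ forces $b \leq o_L(a)$, contradicting the supposed inequality $o_L(a) \leq a$ since $b \not\leq a$.

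The second equivalence rests on what I expect to be the one real lemma of the proof: for $p \neq \bot$, the element $p$ is \cjp if and only if $p \not\leq o_L(p)$. Both directions are short. If $p$ is \cjp and $p \leq o_L(p) = \bigvee \{ z : p \not\leq z \}$, then \cjp ness delivers some $z$ with both $p \leq z$ and $p \not\leq z$, absurd. Conversely, if $p \not\leq o_L(p)$ then any $S$ with $p \leq \bigvee S$ must contain some $s$ with $p \leq s$, since otherwise $S \subseteq \{ z : p \not\leq z \}$, whence $p \leq \bigvee S \leq o_L(p)$. Since $o_L(\bot) = \bot$ settles the case $p = \bot$ trivially, having no \cjp element is equivalent to $p \leq o_L(p)$ for every $p \in L$, that is, to $1 \leq 0$ in $\QL$.

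No serious obstacle is anticipated: once $0$ and $1$ are identified with $o_L$ and $\mathrm{id}_L$, the substantive content reduces to the reformulation of complete join-primeness in terms of $o_L$, after which the two equivalences follow from elementary bookkeeping with suprema.
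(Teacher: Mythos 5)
Your proposal is correct and follows essentially the same route as the paper: identify $1 = id_{L}$ and $0 = id_{L}^{\star} = o_{L}$, reduce both inclusions to the pointwise statements $o_{L} \leq id_{L}$ (equivalently, $\COND{x}{t}$ implies $t \leq x$, i.e.\ $L$ is a chain) and $id_{L} \leq o_{L}$ (equivalently, no element is \cjp). The only difference is that you spell out the equivalence ``$x$ is \cjp iff $x \not\leq o_{L}(x)$'', which the paper states without proof; your argument for it is correct.
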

It is known that the full subcategory of the category of complete
lattices and \jc functions whose objects are the \cdlatt{s}, the \iq
of \cdlatt, is closed under the monoidal operations inherited from the
super category, see e.g. \cite{Rowe1988,HiggsRowe1989,EGHK2018}.  In
particular, this quantaloid is itself a $\star$-autonomous category.
For the sake of studying the equational theory of the $\QL$, this fact
and the previous results jointly yield the following remarkable
consequence:
\begin{cor}
  If $\QL$ is an \iqe, then it is \cd.
\end{cor}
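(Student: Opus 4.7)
The plan is to chain together three ingredients that are already in place: (i) the theorem proved earlier in this excerpt which pins down the dualizing cyclic element of an involutive $\QL$ as $o_{L}$; (ii) Proposition~\propno\ of \cite{EGHK2018}, which characterizes complete distributivity of $L$ by the fact that $o_{L}$ is dualizing; and (iii) the closure property, cited from \cite{Rowe1988,HiggsRowe1989,EGHK2018}, that $\CDLatt$ is a full subcategory of $\SLatt$ closed under the monoidal operations.

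First, I would assume that $\QL$ is an \iqe. By the earlier theorem asserting that a dualizing cyclic element of $\QL$ must coincide with $o_{L}$, the element $o_{L}$ is in particular dualizing in $\QL$. Proposition~\propno\ of \cite{EGHK2018} then immediately yields that $L$ itself is \cd, so $L$ is an object of $\CDLatt$.

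Next, I would identify the complete lattice $\QL$ with the internal hom $[L,L]$ in $\SLatt$: the morphisms of $\SLatt$ from $L$ to $L$ are by definition the \jc endo-functions, and the internal hom's sup-lattice structure is the pointwise one (pointwise joins of \jc maps are again \jc), which agrees with the complete lattice underlying the quantale $\QL$. Thus $\QL$ arises from $L$ by applying an internal hom in $\SLatt$. Invoking closure of $\CDLatt$ under the monoidal operations of $\SLatt$---and therefore under the induced internal hom, since $\SLatt$ is $\star$-autonomous and duality restricts to $\CDLatt$---the object $[L,L]$ lies in $\CDLatt$. Hence $\QL$ is \cd, which is the desired conclusion.

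There is no single hard step: points (ii) and (iii) are imported from the literature, and (i) is one of the theorems stated above. The only point that deserves explicit attention is the identification of the sup-lattice $[L,L]$ in $\SLatt$ with the complete lattice reduct of the quantale $\QL$, so that the closure of $\CDLatt$ under the internal hom transfers to $\QL$; this is, however, just an unpacking of definitions.
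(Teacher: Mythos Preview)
Your proposal is correct and follows essentially the same route as the paper: first deduce that $L$ is \cd from the involutivity of $\QL$ (the paper uses its own Theorem~\ref{thm:direct} and its corollary rather than citing \cite{EGHK2018}, but the content is the same), and then invoke the closure of $\CDLatt$ under the monoidal operations of $\SLatt$ to conclude that the internal hom $\QL = [L,L]$ is again \cd. Your explicit remark that closure under the internal hom follows from closure under tensor and duality (via $\star$-autonomy) spells out a step the paper leaves implicit, but otherwise the arguments coincide.
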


On the side of logic, it is worth observing that enforcing a linear
negation (the involution, the star) on the most typical models of
intuitionistic non-commutative linear logic also enforces a classical
behaviour, distributivity, of the additive logical connectors.  Apart
from the philosophical questions about logic, the above corollary
pinpoints an important obstacle in finding Cayley style representation
theorems for \irl{s} or a generalization of Holland's theorem
\cite{holland1963} from lattice-ordered groups to \irl{s}:
\begin{cor}
  If a \rl embedding of $\Q$ into some \irl of the form $\QL$ exists,
  then $\Q$ is distributive.
\end{cor}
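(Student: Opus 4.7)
The plan is to derive this immediately from the preceding corollary together with the elementary observation that distributivity is inherited by sublattices. More precisely, I would proceed as follows.

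First, I would note that a residuated lattice embedding is in particular an injective homomorphism of the underlying lattice reducts, so it preserves finite joins and meets. Hence if $\Q$ embeds into $\QL$ as a residuated lattice, then $\Q$ embeds as a lattice into $\QL$, identifying $\Q$ with a sublattice of $\QL$.

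Second, I would invoke the previous corollary: since the embedding by hypothesis goes into an involutive residuated lattice of the form $\QL$, the quantale $\QL$ is completely distributive as a lattice. In particular $\QL$ is distributive, since the equational law $x \wedge (y \vee z) = (x \wedge y) \vee (x \wedge z)$ is a special instance of complete distributivity.

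Finally, I would conclude by the trivial but crucial fact that the distributive law is a universally quantified lattice equation, and therefore any sublattice of a distributive lattice is distributive. Thus $\Q$, being isomorphic to a sublattice of the distributive lattice $\QL$, is itself distributive.

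There is essentially no obstacle beyond making explicit that a residuated lattice embedding preserves the binary lattice operations; the substantive mathematical content is packaged into the previous corollary, which in turn rests on the results about completely distributive lattices developed earlier in the paper.
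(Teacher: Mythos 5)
Your proposal is correct and matches the paper's own argument: the paper likewise deduces that $\QL$ is \cd from the preceding corollary (via $L$ being \cd and the closure of \CDLatt under the monoidal operations) and then concludes that $\Q$, as a sublattice of the distributive lattice $\QL$, is distributive. Your only addition is the explicit (and harmless) remark that a \rl embedding restricts to a lattice embedding.
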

Finally, we observe that these mathematical results pinpoint the
importance and the naturalness of considering a linear logic based on
a distributive setting. This algebraic setting has already many
established facets and applications. Among them, let us mention
bunched implication logic, for which our last theorem provides
non-standard pointless models. Let us also mention the usage of this
algebra in pointfree topology: here embeddability problems for
quantales dual to topological groupoids, problems analogous to the
ones we are raising, have already been investigated in depth, see
e.g. \cite{ProtinResende2012}.

\medskip

The paper is organised as follows.  In Section~\ref{sec:elements} we
provide definitions and elementary results.
In Section~\ref{sec:invquantaloids} we introduce the notion of an
involutive quantaloid (we shall identify an involutive quantale with
a one-object \iq).
We prove in Section~\ref{sec:lookingForZero} that if a quantale of the
form $\QL$ is involutive, then it has just one cyclic dualizing
element.  That is, there can be at most one \iqe structure extending
the structure of $\QL$. Moreover, we prove in this section that if
$\QL$ has a non-trivial cyclic element, then $L$ is a \cdlatt.
The uniqueness of the involutive structure is intimately related to
the fact---analyzed at the end of
Section~\ref{sec:lookingForZero}---that the only central elements of
$\QL$ are the identity and the constant function mapping to the bottom
of $L$.
In Section~\ref{sec:raney} we introduce Raney's transforms and their
elementary properties. Raney's transforms are the main tool used to
prove, in Section~\ref{sec:iqantaloid}, that \cdlatt{s} form an \iq.
In Section~\ref{sec:eqTheory} we develop some considerations on the
equational theories of the lattices $\QL$ among which, the use of the
multiplicative units of $\QL$ to characterize properties of $L$ and
the fact that $\QL$ is \cd whenever it is $\QL$ involutive.

\section{Definitions and elementary results}
\label{sec:elements}

\paragraph{Complete lattices and the category \SLatt.}

A \emph{complete lattice} is a poset $L$ such that each $X\subseteq L$ has
a supremum $\bigvee X$.   A map $f : L \rto M$ is \emph{\jc} if
$f(\bigvee X) = \bigvee f(X)$, for each subset $X \subseteq L$.  We
shall denote by $\SLatt$ the category whose objects are the \clatt{s}
and whose morphisms are the \jc maps.

For a poset $P$, $\dual{P}$ denotes the poset with the same elements
of $P$ but with the reverse ordering: $x \leq_{\dual{P}}y$ iff
$y \leq_{P} x$.  In a complete lattice, the set
$\bigvee \set{y \mid y \leq x, \text{ for each $x \in X$}}$ is the
infimum of $X$.  Therefore, if $L$ is complete, then $\dual{L}$ is
also a complete lattice. Moreover, if $L,M$ are \clatt{s} and
$f : L \rto M$ is \jc, then the map $\ra{f} : M \rto L$, defined by
$\ra{f}(y) \eqdef \bigvee \set{x \in L \mid f(x) \leq y}$, preserves
infima and therefore it belongs to the homset
$\SLatt(\dual{M},\dual{L})$.  The map $\ra{f}$ is the \emph{\radj} of
$f$, meaning that, for each $x\in L$ and $y \in M$, $f(x) \leq y$ if
and only if $x \leq \ra{f}(y)$. For $g : M \rto L$ \mc, its
\emph{\ladj} $\la{g} : L \rto M$ is defined similarly, and satisfies
$\la{g}(x) \leq y$ if and only if $x \leq g(y)$, for each $x \in L$
and $y \in M$. Consequenly, $\la{\ra{f}} = f$ and $\ra{\la{g}} = g$.
Indeed, by defining with $\dual{f} \eqdef \ra{f}$,
$\DUAL : \SLatt \rto \SLatt^{op}$ is a (contravariant) functor and a
category isomorphism.

Let $\set{f_{i} \mid i \in I }$ be a family of \jc functions from $L$
to $M$. The function $\bigvee_{i \in I} f_{i}$, defined by
$ (\bigvee_{i \in I} f_{i})(x) \eqdef \bigvee_{i \in I} f_{i}(x)$,
is a \jc map from $L$ to $M$. Therefore the homset $\SLatt(L,M)$, with
the pointwise ordering, is a \clatt, where suprema are computed by the
above formula. The same formula shows that the inclusion of
$\SLatt(L,M)$ into $M^{L}$, the set of all functions form $L$ to $M$,
is \jc. It follows that, for every $f : L \rto M$, there is a
(uniquely determined) greatest \jc function $h \in \SLatt(L,M)$ such
that $h \leq f$; in the following we shall use $\int(f)$ to denote
such $h$. Observe also that, by monotonicity of composition,
$\int(g) \circ \int(f) \leq g \circ f$ and therefore
$\int(g) \circ \int(f) \leq \int(g \circ f)$.

\paragraph{Quantales and involutive quantales.} A \emph{quantale} is a
complete lattice $\Q$ coming with a semigroup operation $\circ$ that
distributes with arbitrary sups. That is, we have
$(\bigvee X) \circ (\bigvee Y) = \bigvee_{x \in X, y \in Y} x \circ
y$, for each $X, Y \subseteq \Q$. A quantale is unital if the
semigroup operation has a unit. As we shall always consider unital
quantales, we shall use the wording quantale as a synonym of unital
quantale. In a quantale $\Q$, left and right residuals are defined as
follows:
$x \below y \eqdef \bigvee \set{z \in \Q \mid x \circ z \leq y}$ and
$y \upon x \eqdef \bigvee \set{z \in \Q \mid z \circ x \leq y}$.
Clearly, we have the following adjointness relations:
$x \circ y \leq z$ iff $y \leq x \below z$ iff $x \leq z \upon y$.
Let us recall that a quantale $\Q$ is a \emph{\rl}, that is an algebra
on the signature $\land,\vee,1,\circ,\below,\upon$, satisfying a
finite identities, see e.g. \cite[\S 2.2]{GJKO}.

A standard example of quantale is $\QL$, the set of \jc endo-functions
of a complete lattice $L$. In this case, the \sg operation is function
composition; otherwise said, $\QL$ is the homset $\SLatt(L,L)$.
We shall consider special elements of $\QL$ and of $\Q(\dual{L})$.
For $x\in L$, let $c_{x},a_{x},\alpha_{x} : L \rto L$ be defined as
follows:
\begin{align}
  \label{eq:defCA}
  c_{x}(t) & \eqdef
  \begin{cases}
    x\,, & t \neq \bot \,,\\
    \bot \,, & t = \bot\,,
  \end{cases}
  &
  a_{x}(t) & \eqdef
  \begin{cases}
    \top\,, & t \not\leq x\,, \\
    \bot \,, & t \leq x\,,
  \end{cases}
  &
  \alpha_{x}(t) & \eqdef
  \begin{cases}
    \top\,, & x \leq t\,, \\
    \bot \,, & x \not \leq t\,.
  \end{cases}
\end{align}
Clearly, $c_{x}, a_{x} \in \QL$ while $\alpha_{x} \in
\Q(\dual{L})$. Moreover, we have $\ra{c_{x}} = \alpha_{x}$.

\paragraph{Completely distributive lattices.}
A \clatt $L$ is said to be \emph{\cd} if, for each pair of families
$\pi : J \rto I$ and $x : J \rto L$, the following equality holds
\begin{align*}
  \bigwedge_{i \in I} \bigvee_{j \in J_{i}} x_{j} & = \bigvee_{\psi} \bigwedge_{i \in I} x_{\psi(i)}\,,
\end{align*}
where $J_{i} = \pi^{-1}(i)$, for each $i \in I$, and the meet on the
right is over all sections $\psi$ of $\pi$, that is, those functions
such that $\pi \circ \psi = id_{I}$. Let us recall that the notion of
a \cdlatt is auto-dual, meaning that a \clatt $L$ is \cd iff
$\dual{L}$ is such. For each \clatt $L$, define
\begin{align}
  \label{eq:defOOmega}
  o_{L}(x) & \eqdef \bigvee \set{t \mid \COND{x}{t}}\,,
  & \omega_{L}(y) & \eqdef \bigwedge \set{t \mid \COND{t}{y}}\,.
\end{align}
It is easy to see that $o_{L} \in \QL$ and that
$\ra{o_{L}} = \omega_{L}$. The following statement appears in
\cite[Theorem~4]{Raney60}:
\begin{theorem}[Raney]
  \label{thm:raney}
  A lattice is \cd if and only if any of the following equivalent
  conditions hold:
  \begin{align}
    \label{eq:omegao}
    \bigvee_{\COND{x}{t}} \omega_{L}(t)
    &  = x\,,
    & \bigwedge_{\COND{t}{y}} o_{L}(t)
    &  = y\,.
  \end{align}
\end{theorem}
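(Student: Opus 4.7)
The plan is to exploit the self-duality of complete distributivity. Passing to $\dual{L}$ swaps meets with joins and the relation $\not\leq$ with $\not\geq$, so $o_{\dual{L}} = \omega_L$ and $\omega_{\dual{L}} = o_L$; the two identities of \eqref{eq:omegao} are therefore mutually dual, and it suffices to prove that $L$ is \cd iff the first identity holds. One inequality requires no hypothesis: whenever $\COND{x}{t}$, $x$ itself belongs to the set whose meet defines $\omega_L(t)$, so $\omega_L(t) \leq x$ and hence $\bigvee_{\COND{x}{t}} \omega_L(t) \leq x$.

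For the forward direction, I would expand $\omega_L(t) = \bigwedge_{\COND{s}{t}} s$ and apply the dual form of the complete distributivity law (valid by self-duality) to rewrite $\bigvee_{\COND{x}{t}} \bigwedge_{\COND{s}{t}} s$ as $\bigwedge_f \bigvee_{\COND{x}{t}} f(t)$, where $f$ ranges over choice functions picking, for each $t$ with $\COND{x}{t}$, some $f(t)$ with $\COND{f(t)}{t}$. A short fixed-point argument then yields $x \leq \bigvee_{\COND{x}{t}} f(t)$ for every such $f$: if $t_0 \eqdef \bigvee_{\COND{x}{t}} f(t)$ satisfied $\COND{x}{t_0}$, then $t_0$ would itself be one of the indices, forcing $\COND{f(t_0)}{t_0}$, contradicting $f(t_0) \leq t_0$ as a summand of the defining join.

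For the converse, I would assume both identities and prove complete distributivity directly. Given families $\pi : J \rto I$ and $x : J \rto L$, set $a \eqdef \bigwedge_{i \in I} \bigvee_{j \in J_i} x_j$ and $b \eqdef \bigvee_\psi \bigwedge_{i \in I} x_{\psi(i)}$; only $a \leq b$ is nontrivial. Writing $a = \bigvee_{\COND{a}{t}} \omega_L(t)$ by the first identity, it suffices to establish $\omega_L(t) \leq b$ whenever $\COND{a}{t}$. Fixing such $t$, each $\bigvee_{j \in J_i} x_j$ is above $a$ and thus fails to lie below $t$, so one may choose a section $\psi$ of $\pi$ with $\COND{x_{\psi(i)}}{t}$ for every $i$; each $x_{\psi(i)}$ then lies in the defining set of $\omega_L(t)$, yielding $\omega_L(t) \leq \bigwedge_i x_{\psi(i)} \leq b$.

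The main conceptual obstacle is the fixed-point trick in the forward direction, where the putative counterexample $t_0$ is revealed to be an index of the very join it is supposed to exceed. The remaining moves—careful bookkeeping of the double-index structure to fit the distributivity law, and a routine use of the axiom of choice to extract sections—are essentially formal.
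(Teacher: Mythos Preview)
The paper does not include a proof of this theorem; it is quoted verbatim as \cite[Theorem~4]{Raney60} and used as a black box throughout. So there is no in-paper argument to compare against.

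Your proof is correct and is essentially Raney's original argument. The duality reduction is sound: since $o_{\dual{L}} = \omega_{L}$ and $\omega_{\dual{L}} = o_{L}$, the two identities in \eqref{eq:omegao} interchange under $L \mapsto \dual{L}$, and complete distributivity is self-dual, so it suffices to treat the first identity. The forward step applies the dual distributive law to $\bigvee_{\COND{x}{t}} \bigwedge_{\COND{s}{t}} s$, and your fixed-point observation---that if $t_{0} \eqdef \bigvee_{\COND{x}{t}} f(t)$ satisfied $\COND{x}{t_{0}}$ then $f(t_{0}) \leq t_{0}$ would contradict $\COND{f(t_{0})}{t_{0}}$---is exactly the classical trick. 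The converse step is also correct: from $a \not\leq t$ and $a \leq \bigvee_{j \in J_{i}} x_{j}$ you correctly deduce $\bigvee_{j \in J_{i}} x_{j} \not\leq t$, whence a section $\psi$ with $\COND{x_{\psi(i)}}{t}$ exists by choice, giving $\omega_{L}(t) \leq \bigwedge_{i} x_{\psi(i)} \leq b$.

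One small presentational inconsistency: after reducing to ``$L$ is \cd iff the first identity holds'', you begin the converse with ``assume both identities''. Your argument in fact uses only the first identity, so this should read ``assume the first identity'' to match the reduction; the second identity then follows \emph{a posteriori} once complete distributivity is established.
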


\section{Involutive quantaloids}
\label{sec:invquantaloids}

The purpose of this section is to define \emph{\iq{s}} which, not
surprisingly, turn out to be the \emph{Girard quantaloids} of
\cite{Rosenthal1992}.  Let us mention that, following
\cite{Barr1979,Barr1995} and \cite{2018-RAMICS,CW}, another possible
naming for the same concept is \emph{non-commutative, cyclic,
  star-autonomous quantaloid}.  For the sake of conciseness, we prefer
the wording \iq.

We recall that a \emph{quantaloid}, see e.g. \cite{Stubbe07}, is a
category $\Q$ enriched over the category of sup-lattices. This means
that, for each pair of objects $L,M$ of $\Q$, the homset $\Q(L,M)$ is
a \clatt and that composition distributes over suprema in both
variables,
$(\bigvee_{i \in I} g_{i}) \circ (\bigvee_{j \in J} f_{j}) =
\bigvee_{i \in I, j \in J} f_{i} \circ g_{j}$. A quantale, see
e.g. \cite{Rosenthal1990}, might be seen as a one-object
quantaloid. The category $\SLatt$ is itself a quantaloid.
The definition below mimics, in a multisorted setting, a possible
definition of \iqe or of \irl.  For the possible equivalent
definitions of these notions, see e.g. \cite{Barr1995} or \cite[\S
3.3]{GJKO}.
\begin{definition}
  An \emph{\iq} is a quantaloid $\Q$ coming with operations
  \begin{align*}
    \fun{\star_{L,M}} : \Q(L,M) \rto \Q(M,L)\,,
    \text{\qquad $L,M$ objects of $\Q$,}
  \end{align*}
  satisfying 
  the following conditions:
  \begin{enumerate}
  \item $(f^{\star_{L,M}}){}^{\star_{M,L}} = f$, for each
    $f \in \Q(L,M)$,
  \item for each $f,g \in \Q(L,M)$,
    \begin{align*}
      f & \leq g \quad\tiff\quad f \circ g^{\star_{L,M}} \leq 0_{M}
      \quad\tiff\quad g^{\star_{L,M}} \circ f \leq 0_{L}\,,
    \end{align*}
    where $0_{M} \eqdef (id_{M})^{\star_{M,M}}$ and
    $0_{L} \eqdef (id_{L})^{\star_{L,L}}$.
  \end{enumerate}
  An \emph{\iqe} is a one-object \iq. 
\end{definition}
The superscripts $L$ and $M$ in $\fun{\star_{L,M}}$ shall be omitted
if they are clear from the context.
We state next elementary facts without proofs, 
the reader shall have no difficulty providing them.
For a category $\C$ enriched over posets, we use $\C^{co}$ for the
category with same objects and homsets, but for which the order is
reversed.
\begin{lemma}
  In an \iq $\Q$,
  if any of the inequalities below holds,
  then so do the other two:
  \begin{center}
    \commTriangle{L}{M}{N}{f}{g}{h} 
    \commTriangle{N}{L}{M}{h^{\star}}{f}{g^{\star}} 
    \commTriangle{M}{N}{L}{g}{h^{\star}}{f^{\star}} 
  \end{center}
  In particular, the operations $\star$ are order reversing, so $\star$
  is the arrow part of a functor $\Q \rto (\Q^{op})^{co}$ which is the
  identity on objects.
\end{lemma}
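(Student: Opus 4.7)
The plan is to translate each of the three triangular inequalities, via condition (2) of the definition of an involutive quantaloid, into an equivalent inequality of the shape ``(composition) $\leq 0$'', and then observe that all three such translations are cyclic rotations of one and the same inequality, hence equivalent.

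Concretely, condition (2) applied to the parallel pair $g \circ f, h : L \to N$ from the first triangle yields
\[
  g \circ f \leq h \iff g \circ f \circ h^{\star} \leq 0_{N} \iff h^{\star} \circ g \circ f \leq 0_{L}\,.
\]
Applied to the parallel pair $f \circ h^{\star}, g^{\star} : N \to M$ from the second triangle, and invoking $(g^{\star})^{\star} = g$, it yields
\[
  f \circ h^{\star} \leq g^{\star} \iff f \circ h^{\star} \circ g \leq 0_{M} \iff g \circ f \circ h^{\star} \leq 0_{N}\,.
\]
The third triangle is entirely symmetric: $h^{\star} \circ g \leq f^{\star} \iff h^{\star} \circ g \circ f \leq 0_{L} \iff f \circ h^{\star} \circ g \leq 0_{M}$. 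The three right-hand sides across the three displays are the three cyclic rotations of $g \circ f \circ h^{\star}$ under the relation ``$\leq 0$''; by inspection they coincide, so the three triangular inequalities are mutually equivalent.

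For the ``in particular'' clause, the same device run once yields order-reversal of $\STAR$. For parallel $f, g : L \to M$, condition (2) gives $f \leq g \iff g^{\star} \circ f \leq 0_{L}$, while, applied to the parallel pair $g^{\star}, f^{\star} : M \to L$, it gives $g^{\star} \leq f^{\star} \iff g^{\star} \circ (f^{\star})^{\star} \leq 0_{L}$. Involutivity identifies $(f^{\star})^{\star}$ with $f$, so both equivalences have the same right-hand side, and hence $f \leq g \iff g^{\star} \leq f^{\star}$. This exhibits each local map $\STAR : \Q(L, M) \to \Q(M, L)$ as an order-reversing bijection (self-inverse by involutivity), which together with the object assignment $L \mapsto L$ constitutes the arrow part of the asserted identity-on-objects map $\Q \to (\Q^{op})^{co}$: passing to ${}^{op}$ aligns sources and targets, while passing to ${}^{co}$ converts order-reversal on $\Q$ into order-preservation on the codomain.

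I expect no conceptual obstacle: the care needed is solely bookkeeping, namely tracking which of $0_{L}, 0_{M}, 0_{N}$ plays the role of the right-hand side at each invocation of condition (2). The essential point is that the two equivalent forms in condition (2), combined with involutivity $(\cdot)^{\star\star} = \mathrm{id}$, are exactly what interchange the three cyclic rotations of $g \circ f \circ h^{\star}$, so that all three triangles encode one and the same cyclic condition.
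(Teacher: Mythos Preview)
Your proof is correct. The paper explicitly omits any proof of this lemma, stating just before it that ``We state next elementary facts without proofs, the reader shall have no difficulty providing them.'' Your argument---reducing each triangle via condition~(2) and involutivity to one of the three cyclic rotations $g\circ f\circ h^{\star}\leq 0_{N}$, $f\circ h^{\star}\circ g\leq 0_{M}$, $h^{\star}\circ g\circ f\leq 0_{L}$, and observing these coincide---is precisely the routine verification the reader is invited to supply, and your derivation of order-reversal from the same device is equally clean.

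One small remark: your final paragraph explains why the ${}^{op}$ and ${}^{co}$ decorations are needed, but (like the paper itself) stops short of verifying that $\STAR$ literally preserves composition, i.e.\ $(g\circ f)^{\star}=f^{\star}\circ g^{\star}$, or identities. In fact $(id_{L})^{\star}=0_{L}$, so the word ``functor'' in the paper's statement is being used somewhat loosely; the substantive content is exactly what you proved.
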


Let us recall that in any quantaloid residuals 
exist being defined as follows: for $f : L \rto M$, $g : M \rto N$,
and $h : L \rto N$, 
\begin{align*}
  g \below h : L \rto[] M & \eqdef \bigvee \set{ k 
    \mid g \circ k \leq h } \,,
  \;\;\;
  h \upon f : M \rto[] N 
  \eqdef \bigvee \set{ k 
    \mid k \circ f \leq h }\,,
\end{align*}
so, the usual adjointness relations hold:
$g \circ f \leq h$ iff
$f \leq g \below h$ iff
$g \leq h \upon f$.
\begin{lemma}
  In an \iq, for $f : L \rto M$, $g : M \rto N$, and $h : L \rto N$,
  we have the following equalities:
  \begin{align*}
    g \below h & = (h^{\star_{L,N}} \circ g)^{\star_{M,L}}\,, 
    &  
    h \upon f & = (f \circ h^{\star_{L,N}})^{\star_{N,M}}\,.  
  \end{align*}
  In particular (for $L = N$ and $h = 0_{L}$) we have
  $g \below 0_{L} = g^{\star_{M,L}}$ and
  $0_{L} \upon f = f^{\star_{L,M}}$.
\end{lemma}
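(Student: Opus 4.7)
The plan is to verify that $(h^{\star} \circ g)^{\star}$ satisfies the defining universal property of the residual $g \below h$, and dually for $h \upon f$. To do this, I intend to reduce both sides of the adjunction $g \circ k \leq h \;\tiff\; k \leq (h^{\star} \circ g)^{\star}$ (quantified over $k : L \rto M$) to one and the same inequality of the form ``something $\leq 0_{L}$'', and then invoke uniqueness of the right adjoint to $g \circ \intfun$.

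Concretely, first I would apply the third clause of the involution axiom, namely $f \leq g$ \tiff $g^{\star} \circ f \leq 0_{L}$, to the parallel pair $g \circ k,\, h \in \Q(L,N)$, obtaining $g \circ k \leq h$ \tiff $h^{\star} \circ g \circ k \leq 0_{L}$. Then I would apply the same clause to the parallel pair $k,\, (h^{\star} \circ g)^{\star} \in \Q(L,M)$, obtaining $k \leq (h^{\star} \circ g)^{\star}$ \tiff $((h^{\star} \circ g)^{\star})^{\star} \circ k \leq 0_{L}$; by the involutivity condition $\fun{\star\star} = \mathrm{id}$ from the first axiom, this collapses to $h^{\star} \circ g \circ k \leq 0_{L}$. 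Matching the two equivalences gives the residual adjunction $g \circ k \leq h$ \tiff $k \leq (h^{\star} \circ g)^{\star}$, hence $g \below h = (h^{\star} \circ g)^{\star}$.

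The symmetric identity $h \upon f = (f \circ h^{\star})^{\star}$ is handled by the mirror argument, this time using the second clause of the axiom $f \leq g$ \tiff $f \circ g^{\star} \leq 0_{M}$, reducing both $k \circ f \leq h$ and $k \leq (f \circ h^{\star})^{\star}$ to $k \circ f \circ h^{\star} \leq 0_{N}$. Finally, the in-particular consequences follow by setting $h = 0_{L} = (id_{L})^{\star}$; involutivity then yields $h^{\star} = id_{L}$, and one reads off $g \below 0_{L} = (id_{L} \circ g)^{\star} = g^{\star}$ and $0_{L} \upon f = (f \circ id_{L})^{\star} = f^{\star}$.

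I anticipate no substantive obstacle here: the whole argument is a purely formal manipulation of the two axioms of an \iq. The only point requiring care is the bookkeeping of source and target objects of the various composites, so as to apply the correctly typed instance of the involution axiom on each side; this is also the only reason for using one clause of the axiom for $g \below h$ and the other for $h \upon f$.
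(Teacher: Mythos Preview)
Your argument is correct. The paper does not actually supply a proof of this lemma: it is introduced with the sentence ``We state next elementary facts without proofs, the reader shall have no difficulty providing them,'' so there is nothing to compare against at the level of detail. Your reduction of both $g\circ k\leq h$ and $k\leq (h^{\star}\circ g)^{\star}$ to the common condition $h^{\star}\circ g\circ k\leq 0_{L}$ via the defining biconditionals of an \iq is exactly the intended elementary verification; the type bookkeeping you flag is the only subtlety, and you handle it correctly. One could equivalently route the argument through the preceding ``triangle'' lemma (equivalence of $g\circ f\leq h$, $f\circ h^{\star}\leq g^{\star}$, $h^{\star}\circ g\leq f^{\star}$) together with the fact that $\STAR$ is an order-reversing involution, but this amounts to the same manipulation repackaged.
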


Let us argue that our definition coincides with the definition of a
Girard quantaloid given in \cite{Rosenthal1992}.  It is readily seen
that, given an \iq $\Q$, the collection
$\set{0_{L} = id_{L}^{\star} \mid \text{ $L$ an object of $\Q$ } }$ is
a cyclic dualizing family in the sense of \cite{Rosenthal1992}.
Conversely, given such a family and $f : L \rto M$, we can define
$f^{\star} \eqdef f \below 0_{M}$ and this definition yields an \iq
structure as defined here. This definition also sets a bijective
correspondence between the two kind of structures.

\section{Cyclic elements of $\QL$
}
\label{sec:lookingForZero}

We prove in this section that if a \qe of the form $\QL$ is
involutive, then $id_{L}^{\star}$ equals $o_{L}$ defined in
equation~\eqref{eq:defOOmega}. From this it follows that there is at
most one \iqe structure on $\QL$ extending the \qe structure.
Moreover, we also prove that if $o_{L}$ is cyclic and distinct from
$c_{\top}$, then $L$ is \cd.
To this end, let us firstly recall the following standard definitions:
\begin{definition}
  Let $\Q$ be a quantale.  An element $\alpha \in \Q$ is said to be
  \begin{itemize}
  \item \emph{cyclic} if $f \below \alpha = \alpha \upon f$, for each
    $f \in \Q$,
  \item \emph{dualizing} if
    $(\alpha \upon f) \below \alpha = \alpha \upon (f \below \alpha) =
    f$, for each $f \in \Q$.
  \end{itemize}
\end{definition}
We already mentioned that \iqe structures on a \qe $\Q$ are in
bijection with cyclic dualizing elements of $\Q$.
Let us also recall that, 
for an \iq $\Q$ and an object $L$ of $\Q$,
$0_{L} := (id_{L})^{\star_{L,L}}$ is both a cyclic and a dualizing
element of the quantale $\Q(L,L)$.

\medskip

An important first observation, stated in the next lemma, is that
residuals of the form $g\below h$ in $\SLatt$ can be constructed by
means of the operations 
$\int(\intfun)$ (greatest \jc map below a given one) and
$\ra{\intfun}$ (taking the \radj of a \jc map).
\begin{lemma}
  For each 
  $g \in \SLatt(M,N)$, $h \in \SLatt(L,N)$, we have
  \begin{align*}
    g \below h & = \int(\rho(g) \circ h).
  \end{align*}
\end{lemma}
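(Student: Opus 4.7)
The plan is to unfold the definition of $g \below h$ as the supremum over all join-continuous $k : L \to M$ with $g \circ k \leq h$, and then rewrite this condition via the adjunction $g \dashv \ra{g}$ to obtain a pointwise inequality against $\ra{g} \circ h$. Once the residuation is transported to the function $\ra{g} \circ h$ (which in general is only meet-continuous, not join-continuous), the defining property of $\int(\intfun)$ immediately delivers the greatest such $k$.

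In detail, first I would observe that for any $k \in \SLatt(L,M)$ and each $x \in L$, the adjointness relation $g(t) \leq s \iff t \leq \ra{g}(s)$ yields
\begin{align*}
g \circ k \leq h \quad\tiff\quad \forall x \in L,\; g(k(x)) \leq h(x) \quad\tiff\quad k \leq \ra{g} \circ h\,,
\end{align*}
the last inequality being understood pointwise in the function lattice $M^{L}$. Consequently the defining set of $g \below h$ equals
\begin{align*}
\set{ k \in \SLatt(L,M) \mid k \leq \ra{g} \circ h }\,.
\end{align*}

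Next I would invoke the definition of $\int$ recalled in Section~\ref{sec:elements}: $\int(\ra{g} \circ h)$ is the greatest \jc map bounded above by $\ra{g} \circ h$ in $M^{L}$. Hence on the one hand $\int(\ra{g} \circ h)$ belongs to the set above, so $\int(\ra{g} \circ h) \leq g \below h$; on the other hand, any $k \in \SLatt(L,M)$ with $k \leq \ra{g} \circ h$ satisfies $k = \int(k) \leq \int(\ra{g} \circ h)$ by monotonicity of $\int$, giving $g \below h \leq \int(\ra{g} \circ h)$. The equality follows.

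There is no serious obstacle here: everything reduces to the adjunction $g \dashv \ra{g}$ and the universal property of $\int$. The only point worth flagging is that $\ra{g} \circ h$ need not itself be join-continuous (it is only meet-continuous in general, being a composite through $\dual{N}$ and $\dual{M}$), which is precisely why the operator $\int$ has to intervene in the statement rather than $\ra{g} \circ h$ appearing on its own.
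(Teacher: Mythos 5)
Your argument is correct and is essentially the paper's own proof: both unfold $g \below h$ via the adjunction $g(k(x)) \leq h(x) \iff k(x) \leq \ra{g}(h(x))$ to reduce membership in the defining set to the pointwise bound $k \leq \ra{g}\circ h$, and then conclude by the universal property of $\int$. The closing remark about why $\int$ must intervene is a nice touch but does not change the substance.
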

\begin{Proof}
  Indeed, for each $f \in \SLatt(L,M)$, we have $f \leq g \below h$
  \tiff $g \circ f \leq h$, \tiff $g(f(x)) \leq h(x)$, for each
  $x \in L$, 
  \tiff $f(x) \leq \ra{g}(h(x))$, for each $x \in L$, \tiff
  $f \leq \rho(g) \circ h$, \tiff $f \leq \int(\rho(g) \circ h)$.
\end{Proof}

For the next lemma, recall that the \jc map $o_{L}$ has been defined
in~\eqref{eq:defOOmega} and that the maps $c_{t}$ and $a_{t}$ have
been defined in~\eqref{eq:defCA}.
\begin{lemma}
  \label{lemma:oLAsAJoin}
  We have $o_{L} = \bigvee_{t \in L} c_{t} \circ a_{t}$. 
\end{lemma}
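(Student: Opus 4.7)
The plan is to establish the identity pointwise: evaluate both sides at an arbitrary $x \in L$ and check they agree. Recall that suprema in $\QL$ are computed pointwise (as noted in the discussion of $\SLatt(L,M)$), so
\begin{align*}
  \Bigl(\bigvee_{t \in L} c_{t} \circ a_{t}\Bigr)(x) = \bigvee_{t \in L} (c_{t} \circ a_{t})(x)\,,
\end{align*}
and it suffices to compute $(c_{t}\circ a_{t})(x)$ by cases using the definitions in~\eqref{eq:defCA}.

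First I would unfold $a_{t}(x)$: if $x \leq t$, then $a_{t}(x) = \bot$, hence $(c_{t}\circ a_{t})(x) = c_{t}(\bot) = \bot$; if $x \not\leq t$, then $a_{t}(x) = \top \neq \bot$, hence $(c_{t}\circ a_{t})(x) = c_{t}(\top) = t$. In summary,
\begin{align*}
  (c_{t} \circ a_{t})(x) =
  \begin{cases}
    t\,, & x \not\leq t\,, \\
    \bot\,, & x \leq t\,.
  \end{cases}
\end{align*}

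Taking the join over $t \in L$ then yields
\begin{align*}
  \bigvee_{t \in L} (c_{t}\circ a_{t})(x) = \bigvee_{\COND{x}{t}} t\,,
\end{align*}
since the summands with $x \leq t$ contribute only $\bot$. By the definition of $o_{L}$ in~\eqref{eq:defOOmega}, the right-hand side is precisely $o_{L}(x)$, which completes the proof.

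There is no real obstacle here: the argument is a direct verification that exploits the pointwise computation of suprema in $\QL$ together with the case analysis built into the definitions of $c_{t}$ and $a_{t}$. The only conceptual point worth noting is that $c_{t} \circ a_{t}$ is manifestly a \jc map (being a composition of \jc maps), so the displayed supremum indeed lives in $\QL$ and no further argument is needed to know that the pointwise formula computes the correct join.
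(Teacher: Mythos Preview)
Your argument is correct and essentially identical to the paper's own proof: both compute $(c_{t}\circ a_{t})(x)$ by the two-case analysis from~\eqref{eq:defCA}, then take the pointwise join and match it against the definition of $o_{L}$ in~\eqref{eq:defOOmega}.
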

\begin{proof}
  Observe that $c_{t}(a_{t}(x)) = \bot$, if $x \leq t$, and
  $c_{t}(a_{t}(x)) = t$, if $x \not\leq t$. Therefore
  \begin{align*}
    (\bigvee_{t \in L} c_{t} \circ a_{t})(x) & = \bigvee_{t \in L}
    (c_{t} (a_{t})(x)) = \bigvee_{\overset{t \in L,}{c_{t}(a_{t}(x))
        \neq \bot}} c_{t} (a_{t}(x)) = \bigvee_{\overset{t \in L,}{x
        \not\leq t}} t = o_{L}(x)\,.  \tag*{\qedhere}
  \end{align*}
\end{proof}

\begin{lemma}
  \label{lemma:intAlpha}
  For each $x \in L$, 
  $\int(\alpha_{x}) = a_{o_{L}(x)}$.
\end{lemma}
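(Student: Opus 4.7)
My plan is to show the two inequalities $a_{o_L(x)} \leq \alpha_x$ pointwise and $h \leq a_{o_L(x)}$ for every $h \in \QL$ with $h \leq \alpha_x$, which together with the fact that $a_{o_L(x)}$ belongs to $\QL$ (it is easily checked to be \jc, since $a_y(\bigvee X) = \top$ iff some $x \in X$ fails $x \leq y$) give the claimed identity.

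For the first inequality, I would argue by cases on $t \in L$. If $t \leq o_L(x)$, then $a_{o_L(x)}(t) = \bot \leq \alpha_x(t)$ trivially. If $t \not\leq o_L(x)$, then $a_{o_L(x)}(t) = \top$ and I need $\alpha_x(t) = \top$, i.e. $x \leq t$; but the contrapositive of the defining property of $o_L$ gives exactly this: were $x \not\leq t$, then $t$ would be a member of the join defining $o_L(x)$, so $t \leq o_L(x)$, a contradiction.

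For the second inequality, let $h \in \QL$ satisfy $h \leq \alpha_x$. The key observation is that for every $t$ with $x \not\leq t$ one has $\alpha_x(t) = \bot$, hence $h(t) = \bot$. Using the \jc of $h$ I then compute
\begin{align*}
  h(o_L(x)) & = h\bigl(\textstyle\bigvee\set{t \mid x \not\leq t}\bigr)
  = \bigvee\set{h(t) \mid x \not\leq t} = \bot\,.
\end{align*}
By monotonicity, $h(s) \leq h(o_L(x)) = \bot$ whenever $s \leq o_L(x)$, so $h(s) = \bot = a_{o_L(x)}(s)$ in that case; if instead $s \not\leq o_L(x)$, then $a_{o_L(x)}(s) = \top$ dominates $h(s)$ trivially. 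Hence $h \leq a_{o_L(x)}$, as required.

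There is no real obstacle here; the content of the lemma is really just the observation that \jc forces $h$ to collapse on the whole principal downset generated by the sup $o_L(x)$ of the kernel of $\alpha_x$, and that this downset is precisely the locus where $a_{o_L(x)}$ vanishes. The only mild subtlety is keeping straight that $\alpha_x$ lives naturally in $\Q(\dual{L})$, so one should not expect $\alpha_x$ itself to be \jc in general; the lemma measures exactly the gap between $\alpha_x$ and its largest \jc approximant.
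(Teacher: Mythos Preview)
Your proof is correct and follows essentially the same route as the paper: you show $a_{o_L(x)} \leq \alpha_x$ pointwise via the contrapositive (the paper phrases this as ``$\alpha_x(t)=\bot$ implies $a_{o_L(x)}(t)=\bot$''), and then use join-continuity of an arbitrary $h \leq \alpha_x$ to conclude $h(o_L(x))=\bot$, whence $h \leq a_{o_L(x)}$. The only minor addition is your explicit mention that $a_{o_L(x)}$ is \jc, which the paper takes for granted from the earlier definitions.
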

\begin{Proof}
  Let us observe that $a_{o(x)} \leq \alpha_{x}$. This amounts to
  veryfing that if $\alpha_{x}(t) = \bot$, then $a_{o(x)}(t) =
  \bot$. Now, $\alpha_{x}(t) = \bot$ \tiff $x \not\leq t$, and so
  $t \leq o(x)$, thus $a_{o(x)}(t) = \bot$.
  Next, 
  let us suppose that $f : L \rto L$ is \jc and below
  $\alpha_{x}$. Thus, if $\alpha_{x}(t) = \bot$, that is, if
  $x \not\leq t$, then $f(t) = \bot$.  Then
  $f(o(x))=f(\bigvee_{x \not \leq t} t) = \bigvee_{x \not \leq t} f(t)
  = \bot$. By monotonicity of $f$, if $t \leq o(x)$, then
  $f(t) = \bot$, showing that $f \leq a_{o(x)}$.
\end{Proof}

\begin{theorem}
  \label{thm:twocyclic}
  For each \clatt $L$, the \qe $\QL$ has at most two cyclic elements,
  among $c_{\top}$ and $o_{L}$.
\end{theorem}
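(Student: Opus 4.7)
My plan is to show that any cyclic $\alpha \in \QL$ coincides with $c_\top$ or with $o_L$ by spelling out the identity $c_x \below \alpha = \alpha \upon c_x$ for every $x \in L$. I would first compute both residuals explicitly. Using the earlier lemma $g \below h = \int(\rho(g) \circ h)$ together with $\rho(c_x) = \alpha_x$, the left residual is $\int(\alpha_x \circ \alpha)$; since $\alpha_x \circ \alpha$ is the $\set{\bot,\top}$-valued indicator of the up-set $\set{t \mid x \leq \alpha(t)}$, an argument entirely analogous to Lemma~\ref{lemma:intAlpha} gives
\[
c_x \below \alpha = a_{w_x}, \qquad w_x \eqdef \bigvee \set{t \in L \mid x \not\leq \alpha(t)}.
\]
For the right residual I would observe that $g \circ c_x \leq \alpha$ is equivalent to $g(x) \leq m$, where $m \eqdef \bigwedge_{t \neq \bot} \alpha(t)$, and that the largest \jc map with this property (for $x \neq \bot$) is $a_x \vee c_m$; hence $\alpha \upon c_x = a_x \vee c_m$.

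Evaluating the resulting equation $a_{w_x} = a_x \vee c_m$ at the point $t = x$ returns a value in $\set{\bot,\top}$ on the left and the value $m$ on the right, which forces $m \in \set{\bot,\top}$. The argument then splits. If $m = \top$, then $\alpha(t) = \top$ for every $t \neq \bot$, and $\alpha = c_\top$. If $m = \bot$, the equation collapses to $a_{w_x} = a_x$, and I obtain the key identity
\[
\bigvee \set{t \mid x \not\leq \alpha(t)} = x, \qquad x \neq \bot.
\]

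It remains to upgrade this identity to $\alpha = o_L$ by sandwiching. Reading the identity with $x = t \neq \bot$ shows that $\set{s \mid t \not\leq \alpha(s)} \subseteq \set{s \mid s \leq t}$, i.e., $\alpha(s) \geq t$ whenever $s \not\leq t$, which is exactly $c_t \circ a_t \leq \alpha$; summing over $t$ and invoking Lemma~\ref{lemma:oLAsAJoin} yields $o_L \leq \alpha$. Conversely, for each $s$ with $\alpha(s) \neq \bot$, applying the identity with $x = \alpha(s)$ and using monotonicity of $\alpha$ to obtain $\set{t \mid \alpha(s) \not\leq \alpha(t)} \subseteq \set{t \mid s \not\leq t}$ gives $\alpha(s) \leq o_L(s)$; this is trivial when $\alpha(s) = \bot$, so $\alpha \leq o_L$.

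The step I expect to require the most care is the explicit identification $c_x \below \alpha = a_{w_x}$: one must verify that $a_{w_x}$ is \jc, lies below $\alpha_x \circ \alpha$, and dominates every \jc function below the latter. Once this is secure, the dichotomy on $m$ and the two-sided sandwich in the non-trivial case both reduce to short direct calculations.
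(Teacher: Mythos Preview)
Your argument is correct. The identification $c_x \below \alpha = a_{w_x}$ goes through exactly as you outline (it is the natural generalisation of Lemma~\ref{lemma:intAlpha}: any \jc $f \leq \alpha_x \circ \alpha$ must vanish on each $t$ with $x \not\leq \alpha(t)$, hence on their join $w_x$, so $f \leq a_{w_x}$), and the computation $\alpha \upon c_x = a_x \vee c_m$ for $x \neq \bot$ is straightforward once one checks that $a_x \vee c_m$ is \jc. The dichotomy on $m$ and the sandwich $o_L \leq \alpha \leq o_L$ from the identity $w_x = x$ are then clean.

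The paper's proof is organised differently. Instead of computing both residuals and equating them, it uses cyclicity in the implicational form $fg \leq h \Leftrightarrow gf \leq h$ twice. First, from the trivial $a_x \circ c_x = c_{\bot} \leq h$ it gets $c_x \circ a_x \leq h$ for every $x$, whence $o_L \leq h$ directly via Lemma~\ref{lemma:oLAsAJoin}; this is shorter than your route to the same inequality through $w_t \leq t$. Second, from $c_x \circ (c_x \below h) \leq h$ it gets $(c_x \below h) \circ c_x \leq h$, and then only bounds $c_x \below h$ \emph{below} by $\int(\alpha_x) \circ h = a_{o_L(x)} \circ h$ (Lemma~\ref{lemma:intAlpha}) rather than computing it exactly; evaluating at any $t \neq \bot$ yields $a_{o_L(x)}(h(x)) \leq h(t)$, from which $h \not\leq o_L$ forces $h = c_\top$. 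So the paper never needs $\alpha \upon c_x$ explicitly, nor the quantity $m$, nor the identity $w_x = x$. Your approach is more symmetric and produces the pleasant equation $w_x = x$ as an intermediate characterisation of $o_L$; the paper's is more economical in that it only needs a one-sided estimate on $c_x \below h$.
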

\begin{Proof}
  Now, let $h \in \QL$ be cyclic.
  First we  prove that $o_{L} \leq h$.  Consider that, for each
  $x \in L$, $a_{x} \circ c_{x} = c_{\bot} \leq h $. Thus, since
  $g \circ f \leq h$ if and only if $f \circ g \leq h$, we also have
  $c_{x} \circ a_{x} \leq h$. Since this relation holds for each
  $x \in L$, then, using Lemma~\ref{lemma:oLAsAJoin}, the relation
  $o_{L} = \bigvee_{x \in L} c_{x} \circ a_{x} \leq h$ holds.

  We argue now that if $h \neq c_{\top}$, then $h \leq o_{L}$ and
  therefore $h = o_{L}$.
  Let $x \in L$ and consider that $c_{x} \circ c_{x} \below h \leq
  h$. By cyclicity, we also have $c_{x} \below h \circ c_{x}\leq h$.

  Now,
  $c_{x} \below h = \int(\rho(c_{x}) \circ h) = \int(\alpha_{x} \circ
  h)$ and therefore, using Lemma~\ref{lemma:intAlpha},
  \begin{align*}
    a_{o_{L}(x)} \circ h \circ c_{x} & = \int(\alpha_{x}) \circ \int(h)
    \circ c_{x} \leq \int(\alpha_{x} \circ h) \circ c_{x} =
    c_{x} \below h \circ c_{x} \leq
    h\,.
  \end{align*}
  If $t \neq \bot$, then, by evaluating the above inequality at $t$,
  we get $a_{o_{L}(x)}(h(x)) \leq h(t)$.  Since $a_{o_{L}(x)}(h(x))$
  takes values $\bot$ and $\top$, this means that
  $a_{o_{L}(x)}(h(x)) = \top$ implies $\top \leq h(t)$, for all
  $t \neq \bot$. That is, if $h(x) \not\leq o_{L}(x)$, then
  $h(t) = \top$, for all $t \neq \bot$ and $x \in L$. Otherwise
  stated, if $h \not\leq o_{L}$, 
  then
  $h = c_{\top}$.
\end{Proof}

Let us recall that a nucleus on a quantale $\Q$ is a closure operator
$j$ such that $j(g) \circ j(f) \leq f(g \circ f)$. Nuclei are sort of
congruences in the category of quantales while quotients into some
\iqe bijectively correspond to nuclei $j$ of the form
$j(f) = (f \below 0) \below 0$ where $0$ is cyclic \cite[Theorem
1]{Rosenthal1990b}. Thus, the above theorem exhibits the quantales
$\Q(L)$ as sort of simple w.r.t. \iqe{s}.

\begin{lemma}
  \label{lemma:cTopNotDualizing}
  If $L$ 
  is not trivial, then $c_{\top}$ is not a dualizing element of $\QL$.
\end{lemma}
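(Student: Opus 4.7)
The plan is to exploit the fact that $c_{\top}$ is not merely some cyclic element but in fact the \emph{top element} of the quantale $\QL$. Once this is noted, the residuals $c_{\top} \upon f$ and $f \below c_{\top}$ collapse trivially, and the dualizing equations cannot recover non-trivial $f$.

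First I would observe that $c_{\top}$ is the top of $\QL$. For any $f \in \QL$, join-continuity forces $f(\bot) = \bigvee \emptyset = \bot = c_{\top}(\bot)$, while $f(x) \leq \top = c_{\top}(x)$ for all $x \neq \bot$; hence $f \leq c_{\top}$. Consequently, for any $f,g \in \QL$, $g \circ f \leq c_{\top}$ holds unconditionally, so
\begin{align*}
  c_{\top} \upon f & = \bigvee\set{g \mid g \circ f \leq c_{\top}} = c_{\top}\,,
\end{align*}
and similarly $f \below c_{\top} = c_{\top}$.

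Now suppose, for contradiction, that $c_{\top}$ is dualizing. Then for every $f \in \QL$ we would have
\begin{align*}
  f & = (c_{\top} \upon f) \below c_{\top} = c_{\top} \below c_{\top} = c_{\top}\,,
\end{align*}
so every element of $\QL$ equals $c_{\top}$. But if $L$ is non-trivial, i.e.\ $\bot \neq \top$, then $c_{\bot} \in \QL$ and $c_{\bot}(\top) = \bot \neq \top = c_{\top}(\top)$, so $c_{\bot} \neq c_{\top}$, a contradiction.

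There is no real obstacle here; the whole proof pivots on the initial observation that the apparently special element $c_{\top}$ is in fact the top of $\QL$, after which the residuals become trivial and dualizability would collapse $\QL$ to a one-element quantale.
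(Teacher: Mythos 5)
Your proof is correct and follows essentially the same route as the paper's: observe that $c_{\top}$ is the top of $\QL$, so both residuals into $c_{\top}$ collapse to $c_{\top}$, and then the dualizing identity applied to $c_{\bot}$ forces $c_{\bot} = c_{\top}$, contradicting non-triviality of $L$. The only cosmetic difference is that you first derive the collapse for all $f$ before specializing to $c_{\bot}$, whereas the paper specializes immediately.
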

\begin{Proof}
  Observe that $c_{\top}$ is the greatest element of $\QL$ and, for
  this reason, $f \below c_{\top} = c_{\top} \upon f = c_{\top}$, for
  each $f \in \QL$. If $c_{\top}$ is dualizing, then
  $c_{\bot} = (c_{\top}\upon c_{\bot})\below c_{\top} =
  c_{\top}$. Considering that the mapping from sending $x \in L$ to
  $c_{x} \in \QL$ is an embedding, this shows that $\bot =\top$ in
  $L$.
\end{Proof}

\begin{corollary}
  If $h\in \QL$ is a cyclic and dualizing element, then $h = o_{L}$.
  That is, if $\QL$ is an \iqe, then $id_{L}^{\star} = o_{L}$.
\end{corollary}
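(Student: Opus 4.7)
The plan is to derive the corollary as an immediate consequence of the two preceding results: Theorem~\ref{thm:twocyclic} (which pins down the cyclic elements of $\QL$ to at most $c_{\top}$ and $o_{L}$) and Lemma~\ref{lemma:cTopNotDualizing} (which rules out $c_{\top}$ as a dualizing element).

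Concretely, suppose $h \in \QL$ is both cyclic and dualizing. I would first split on whether $L$ is trivial. If $L$ is trivial, then $\QL$ collapses to a one-element quantale in which $c_{\top}=o_{L}$, so there is nothing to prove. If $L$ is not trivial, then by Theorem~\ref{thm:twocyclic} cyclicity forces $h \in \{c_{\top}, o_{L}\}$, and by Lemma~\ref{lemma:cTopNotDualizing} the element $c_{\top}$ fails to be dualizing. Hence $h = o_{L}$.

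For the second assertion, recall (as observed in Section~\ref{sec:invquantaloids}) that an \iqe structure on a quantale $\Q$ is in bijective correspondence with the choice of a cyclic dualizing element, the element in question being $id_{L}^{\star} = 0_{L}$. If $\QL$ carries such a structure, then $id_{L}^{\star}$ is a cyclic dualizing element of $\QL$, and the first part of the corollary applies to yield $id_{L}^{\star} = o_{L}$.

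Since both building blocks are already in place, no step presents a genuine obstacle; the only mild subtlety is remembering to dispense with the trivial lattice separately, because the hypothesis of Lemma~\ref{lemma:cTopNotDualizing} explicitly excludes it. The proof is therefore essentially a two-line assembly of Theorem~\ref{thm:twocyclic} and Lemma~\ref{lemma:cTopNotDualizing}.
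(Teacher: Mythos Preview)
Your proposal is correct and follows essentially the same approach as the paper: split on whether $L$ is trivial, and in the non-trivial case combine Theorem~\ref{thm:twocyclic} with Lemma~\ref{lemma:cTopNotDualizing} to rule out $c_{\top}$. The paper's proof is almost verbatim your argument, only leaving the second assertion implicit.
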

\begin{Proof}
  If $L$ is trivial, then so is $\QL$, and $h = c_{\bot} = o_{L}$.
  If $L$ is not trivial, then, by Theorem~\ref{thm:twocyclic},
  $h \in \set{o_{L}, c_{\top}}$ and, by
  Lemma~\ref{lemma:cTopNotDualizing}, $h \neq c_{\top}$.
\end{Proof}

With respect to Theorem~\ref{thm:twocyclic}, we notice that
$c_{\top}$, being the top element of $\QL$, is always cyclic. It is
therefore pertinent to ask when $o_{L}$ is cyclic. Of course, this
is the case if $o_{L} = c_{\top}$.

\begin{theorem}
  \label{thm:direct}
  If $o_{L}$ is a cyclic element of $\QL$ and  $o_{L} \neq c_{\top}$, then
  $x = \bigwedge_{t \not\leq x} o_{L}(t)$,
  for each $x \in L$. Consequently, $L$ is a \cdlatt.
\end{theorem}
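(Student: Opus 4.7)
The aim is to establish the pointwise equality $x = \bigwedge_{t \not\leq x} o_L(t)$ for every $x \in L$; once this is in place, the claim that $L$ is \cd follows immediately from Raney's Theorem~\ref{thm:raney} via the second equality in~\eqref{eq:omegao}. The inequality $x \leq \bigwedge_{t \not\leq x} o_L(t)$ is immediate from the definition of $o_L$: whenever $t \not\leq x$, the element $x$ itself lies in $\set{s \mid t \not\leq s}$, so $x \leq o_L(t)$.

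For the converse inequality I plan to argue by contrapositive, showing that whenever $y \not\leq x$ there exists some $t \in L$ with $t \not\leq x$ and $y \not\leq o_L(t)$. The tool is that cyclicity of $o_L$, i.e.\ the equality $f \below o_L = o_L \upon f$, is equivalent in quantale form to the symmetric condition $f \circ g \leq o_L$ if and only if $g \circ f \leq o_L$, for all $f, g \in \QL$. I apply this to the elements $a_x$ and $c_y$ of~\eqref{eq:defCA}. A direct computation shows that, provided $y \not\leq x$, the composite $a_x \circ c_y$ equals $c_\top$: for $t \neq \bot$ we have $(a_x \circ c_y)(t) = a_x(y) = \top$ because $y \not\leq x$, while $(a_x \circ c_y)(\bot) = \bot$ because $\bot \leq x$. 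Since we are assuming $o_L \neq c_\top$, it follows that $a_x \circ c_y \not\leq o_L$, and cyclicity then forces $c_y \circ a_x \not\leq o_L$. Hence some $t \in L$ witnesses $(c_y \circ a_x)(t) \not\leq o_L(t)$. On the other hand, the definitions give $(c_y \circ a_x)(s) = \bot$ for $s \leq x$ and $(c_y \circ a_x)(s) = y$ for $s \not\leq x$, so the witnessing $t$ must satisfy $t \not\leq x$ and $y \not\leq o_L(t)$, exactly as required. Applying this to $y \eqdef \bigwedge_{t \not\leq x} o_L(t)$ yields $y \leq x$, completing the proof of the equality, and then complete distributivity follows by quoting Theorem~\ref{thm:raney}.

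I do not foresee a serious obstacle here. The choice of witnesses $a_x$ and $c_y$ is a natural companion to the $a_x$-$c_x$ swap already exploited in the proof of Theorem~\ref{thm:twocyclic} to constrain cyclic elements. The only delicate point is to verify that $a_x \circ c_y$ really equals the top element $c_\top$ and not merely something large, which depends on carefully isolating the value at $\bot$ (where $a_x(\bot) = \bot$ forces $(a_x \circ c_y)(\bot) = \bot$) and then invoking the hypothesis $o_L \neq c_\top$ to convert this equality into the strict non-inequality $a_x \circ c_y \not\leq o_L$ that drives the argument.
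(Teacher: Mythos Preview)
Your argument is correct and follows essentially the same route as the paper: both proofs exploit cyclicity to pass between $c_{y}\circ a_{x}\leq o_{L}$ and $a_{x}\circ c_{y}\leq o_{L}$, interpret the first condition as $y\leq \bigwedge_{t\not\leq x}o_{L}(t)$ and the second (under $o_{L}\neq c_{\top}$) as $y\leq x$, and then invoke Raney's Theorem. The only cosmetic difference is that you phrase the key step as a contrapositive search for a witness $t$, whereas the paper packages it as the biconditional ``$y\leq x$ iff $y\leq \bigwedge_{t\not\leq x}o_{L}(t)$''.
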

\begin{Proof}
  Since $o_{L}$ is cyclic, then, for each $x,y \in L$, the two
  conditions (a) $c_{y} \circ a_{x} \leq o_{L}$ and
  (b) $a_{x} \circ c_{y} \leq o_{L}$ are equivalent.

  Condition (a) states that, for each $t \in L$, $\COND{t}{x}$ implies
  $y \leq o_{L}(t)$; that is
  $y \leq \bigwedge_{\COND{t}{x}} o_{L}(t)$.  Condition (b) states
  that, for each $t \neq \bot$, if $\COND{y}{x}$ then
  $o_{L}(t) = \top$. This condition is equivalent to $\COND{y}{x}$
  implies $o_{L} = c_{\top}$ or, equivalently, to
  $o_{L} \neq c_{\top}$ implies $y \leq x$.  Thus we have that, if
  $o_{L}$ is cyclic and $o_{L} \neq c_{\top}$, then (c) for each
  $x,y \in L$, $y \leq x$ iff
  $y \leq \bigwedge_{\COND{t}{x}} o_{L}(t)$.
  Now, condition (c) is easily recognized to be equivalent to
  the equality $x = \bigwedge_{\COND{t}{x}} o_{L}(t)$, holding for
  each $x \in L$.
  From the latter identity, complete distributivity of $L$ follows
  using Raney's characterization of complete distributivity,
  Theorem~\ref{thm:raney}.
\end{Proof}

In this way we also obtain a refinement of one side of the equivalence
stated in Proposition~\propno of \cite{EGHK2018}, where we do not need
to refer to the cyclic dualizing element.
\begin{corollary}
  If $\QL$ is an \iqe, then 
  $L$ is a \cdlatt.
\end{corollary}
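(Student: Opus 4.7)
The plan is to assemble the corollary from the three preceding results: the corollary immediately above, which identifies $id_L^{\star}$ with $o_L$ whenever $\QL$ is an \iqe; Lemma~\ref{lemma:cTopNotDualizing}, which rules out $c_{\top}$ as a dualizing element when $L$ is nontrivial; and Theorem~\ref{thm:direct}, which concludes complete distributivity from $o_L$ being cyclic and distinct from $c_{\top}$.

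Concretely, I would split on whether $L$ is trivial. If $L$ is trivial then complete distributivity is automatic. Otherwise, assuming $\QL$ is an \iqe, the preceding corollary gives $id_L^{\star} = o_L$, so $o_L$ is the cyclic dualizing element of $\QL$. In particular $o_L$ is cyclic. By Lemma~\ref{lemma:cTopNotDualizing}, $c_{\top}$ is not dualizing in $\QL$, so $o_L \neq c_{\top}$. Therefore Theorem~\ref{thm:direct} applies and yields that $L$ is a \cdlatt.

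There is no serious obstacle here: the corollary is a direct packaging of previously established facts, and the only subtlety is to explicitly separate the trivial case so that Lemma~\ref{lemma:cTopNotDualizing} can be invoked.
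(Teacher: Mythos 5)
Your argument is correct and follows essentially the same route as the paper: the preceding corollary identifies the cyclic dualizing element with $o_{L}$, Lemma~\ref{lemma:cTopNotDualizing} gives $o_{L} \neq c_{\top}$, and Theorem~\ref{thm:direct} concludes. Your explicit handling of the trivial case is a minor (and welcome) refinement of the paper's more terse phrasing.
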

\begin{Proof}
  If $\QL$ is an \iqe, then its dualizing cyclic element is,
  necessarily, $o_{L}$. In particular, $o_{L}$ is cyclic and distinct
  from $c_{\top}$. By Theorem~\ref{thm:direct}, $L$ is \cd.
\end{Proof}
We shall see that $o_{L}$ is also dualizing if $L$ is \cd.  A
remarkable fact arising from these considerations is that, on the
class of pointed residuated lattices $\langle \QL,p \rangle$ (where
$p \in \QL$ is the point), the universal sentence
$p \neq \top \;\&\; \forall x. x \below p = p \upon x$ implies distributivity
as well as the linear double negation principle,
$x = (x \below p) \below p$.

\subsubsection{The center of $\QL$.}
\label{sec:center}

Uniqueness of an \iqe structure extending the \qe structure of $\QL$
can also be achieved through the observation that the unique central
elements of $\QL$ are $id_{L}$ and $c_{\bot}$. We are thankful to
Claudia Muresan for her help with investigating the center of $\QL$.
\begin{definition}
  We say that an element $\beta$ of a quantale $\Q$ is 
  \begin{itemize}
  \item \emph{central} if $\beta \circ x = x \circ \beta$, for each
    $x \in \Q$,
  \item \emph{codualizing} if $x = \beta \below (\beta \circ x)$, for
    each $x \in \Q$.
  \end{itemize}
\end{definition}

\begin{lemma}
  If $\Q$ is an \iqe, then $\alpha \in \Q$ is cyclic if and only if
  $\alpha^{\star}$ is central and it is dualizing if and only if
  $\alpha^{\star}$ is codualizing.
\end{lemma}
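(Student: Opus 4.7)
The plan is to reduce both biconditionals to purely compositional identities by appealing to the formulas
\begin{align*}
  g \below h &= (h^{\star} \circ g)^{\star}, &
  h \upon f &= (f \circ h^{\star})^{\star}
\end{align*}
established in the preceding lemma. Writing $\beta := \alpha^{\star}$ and specialising, one obtains
\begin{align*}
  f \below \alpha &= (\beta \circ f)^{\star}, &
  \alpha \upon f &= (f \circ \beta)^{\star}.
\end{align*}
Since $\fun{\star}$ is an involution on $\Q$, any equation between such $\star$-ed expressions can be freely undone by one more application of $\fun{\star}$.

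For the first equivalence I would read off directly from the two displayed formulas that the cyclicity identity $f \below \alpha = \alpha \upon f$ for every $f$ is equivalent, after applying $\fun{\star}$, to $\beta \circ f = f \circ \beta$ for every $f$, i.e.\ to the centrality of $\beta$.

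For the second equivalence I would unfold each of the two dualizing identities using the same two formulas. For the first half:
\begin{align*}
  \alpha \upon (f \below \alpha)
  = \alpha \upon (\beta \circ f)^{\star}
  = ((\beta \circ f)^{\star} \circ \beta)^{\star}
  = \beta \below (\beta \circ f),
\end{align*}
where the last equality is the $\below$-formula with $\beta$ in place of $g$ and $\beta \circ f$ in place of $h$. Thus $\alpha \upon (f \below \alpha) = f$ for every $f$ is verbatim the codualizing identity $\beta \below (\beta \circ f) = f$. A symmetric calculation gives $(\alpha \upon f) \below \alpha = (f \circ \beta) \upon \beta$, matching the mirror half of the codualizing condition.

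The main thing to keep track of is the $\fun{\star}$-bookkeeping: the equivalences are entirely formal, requiring only the substitution $\beta = \alpha^{\star}$ and two uses of the involutive residuation formulas, together with the fact that $\fun{\star}$ is a bijection. I expect no real obstacle beyond ensuring that both halves of the dualizing condition are treated, which the symmetric pair of computations above handles uniformly.
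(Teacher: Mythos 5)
Your proof is correct and follows essentially the same route as the paper: both reduce the residuals to $\star$-expressions via the formulas $g\below h = (h^{\star}\circ g)^{\star}$ and $h\upon f = (f\circ h^{\star})^{\star}$ and then cancel the involution to get centrality, resp.\ the codualizing identity. The only difference is that you also unfold the half $(\alpha\upon f)\below\alpha = f$ of the dualizing condition, obtaining $(f\circ\beta)\upon\beta = f$ --- which is not literally a clause of the paper's definition of codualizing, so strictly one should still note that it follows from (or coincides with) $\beta\below(\beta\circ f)=f$ in the situation at hand (e.g.\ when $\alpha$ is also cyclic, as in the paper's application, the two residuations agree) --- whereas the paper silently treats only the half $\alpha\upon(f\below\alpha)=f$.
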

\begin{Proof}
  Since $x \below \alpha = (\alpha^{\star} \circ x)^{\star}$,
  $\alpha \upon x = (x \circ \alpha^{\star})^{\star}$, and
  $\fun{\star}$ is invertible, the equality
  $x \below \alpha = \alpha \upon x$ holds if and only if the equality
  $\alpha^{\star} \circ x = x \circ \alpha^{\star}$ holds.

  Now $\alpha$ is dualizing if and only if, for each $x \in \Q$,
  $x = \alpha \upon (x \below \alpha) = \alpha^{\star}\below (x \below
  \alpha)^{\star} = \alpha^{\star}\below (\alpha^{\star} \circ x)$.
\end{Proof}

\begin{proposition}
  \label{prop:center}
  The only central elements of $\QL$ are $id_{L}$ and $c_{\bot}$.
\end{proposition}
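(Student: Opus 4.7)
The plan is to use the family $\set{c_{x} \mid x \in L}$ defined in~\eqref{eq:defCA} to probe any central element $\beta$, exactly as the family $\set{a_{x} \circ c_{x} \mid x \in L}$ and $\set{c_{x} \circ a_{x} \mid x \in L}$ was used to characterize cyclic elements in Theorem~\ref{thm:twocyclic}.

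First I would verify that $id_{L}$ and $c_{\bot}$ really are central. For $id_{L}$ this is immediate. For $c_{\bot}$, note that any $f \in \QL$, being \jc, satisfies $f(\bot) = \bot$, hence $f \circ c_{\bot} = c_{\bot}$ and also $c_{\bot} \circ f = c_{\bot}$. (If $L$ is trivial then $id_{L} = c_{\bot}$ and there is nothing to do; so in what follows assume $L$ has at least two elements.)

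For the converse, let $\beta \in \QL$ be central and fix $x \in L$ and $t \in L$ with $t \neq \bot$. Evaluating the equality $\beta \circ c_{x} = c_{x} \circ \beta$ at $t$ gives
\begin{align*}
  \beta(x) & = (\beta \circ c_{x})(t) = (c_{x} \circ \beta)(t)
  = \begin{cases} x\,, & \beta(t) \neq \bot\,, \\ \bot\,, & \beta(t) = \bot\,. \end{cases}
\end{align*}
Thus for every $t \neq \bot$: either $\beta(t) \neq \bot$, in which case $\beta(x) = x$ holds for all $x \in L$, or $\beta(t) = \bot$, in which case $\beta(x) = \bot$ holds for all $x \in L$. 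I would now split on whether there exists some $t_{0} \neq \bot$ with $\beta(t_{0}) \neq \bot$. If yes, the first alternative gives $\beta(x) = x$ for every $x \in L$, so $\beta = id_{L}$. If no, then $\beta(t) = \bot$ for all $t \neq \bot$, and since $\beta(\bot) = \bot$ by join-continuity, we get $\beta = c_{\bot}$.

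No step here looks like a serious obstacle; the only mild subtlety is the case analysis above, which is entirely driven by the two-valued behaviour of $c_{x}$ on $t = \bot$ vs.\ $t \neq \bot$. The essential feature being exploited is that $c_{x}$ sends every non-bottom element to $x$, so commutation with $c_{x}$ forces $\beta$ to be almost constant or to fix every element.
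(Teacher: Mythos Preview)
Your proof is correct and, in fact, slightly slicker than the paper's. Both arguments probe a central $\beta$ with the constants $c_{x}$, but they extract different information from the commutation.

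The paper evaluates $\beta \circ c_{x} = c_{x} \circ \beta$ at the point $x$ itself, obtaining $\beta(x) = c_{x}(\beta(x))$, whence $\beta(x) \in \set{\bot, x}$ for each $x$ separately. This only shows that $\beta$ coincides with a threshold function $\nu_{x_{0}}$ (sending $t \leq x_{0}$ to $\bot$ and all other $t$ to themselves) for $x_{0} = \bigvee \set{y \mid \beta(y) = \bot}$, and a second step is then required: one checks directly that $\nu_{x_{0}}$ fails to commute with $c_{\top}$ whenever $x_{0} \notin \set{\bot,\top}$.

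Your evaluation at an arbitrary $t \neq \bot$ (rather than at $t = x$) is the sharper move: it yields $\beta(x) = c_{x}(\beta(t))$ with $x$ free, so a single value $\beta(t)$ determines all of $\beta$ at once, and the dichotomy $\beta(t) = \bot$ versus $\beta(t) \neq \bot$ immediately gives $\beta = c_{\bot}$ or $\beta = id_{L}$. No intermediate family $\nu_{x_{0}}$ and no second commutation test are needed. Your final case split on the existence of some $t_{0} \neq \bot$ with $\beta(t_{0}) \neq \bot$ is even a bit redundant: since $L$ is non-trivial, picking any single $t \neq \bot$ already decides the matter.
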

\begin{Proof}
Clearly, $id_{L}$ and $c_{\bot}$ are central, so we shall be concerned
to prove that they are the only ones with this property.  To this end,
for $x_{0} \in L$, define
\begin{align*}
  \nu_{x_{0}}(t) & \eqdef
  \begin{cases}
    \bot\,, & t \leq x_{0}\,\\
    t\,, & \text{otherwise}.
  \end{cases}
\end{align*}
Notice that if $x_{0} = \bot$, then $\nu_{x_{0}} = id_{L}$, while if
$x_{0} = \top$, then $\nu_{x_{0}} = c_{\bot}$. 
  We firstly claim that if $\beta$ is central in $\QL$, then
  $\beta = \nu_{x_{0}}$, for some $x_{0} \in L$.
  Suppose $\beta$ is central. For each $x \in L$, we have
  $c_{x}(x) = x$ and therefore
  \begin{align*}
    \beta(x) & = (\beta \circ c_{x})(x) = c_{x}(\beta(x))\,.
  \end{align*}
  If $\beta(x) \neq \bot$, then, evaluating the rightmost expression,
  we obtain $\beta(x) = x$.  Let
  $x_{0} \eqdef \bigvee \set{y \mid \beta(y) = \bot}$, so
  $\beta(x_{0}) = \bot$. If $t \leq x_{0}$, then
  $\beta(t) \leq \beta(x_{0}) = \bot$ and, otherwise,
  $\beta(t) \neq \bot$ and so $\beta(t) = c_{t}(\beta(t)) =
  t$. Therefore, $\beta = \nu_{x_{0}}$.

  Next, we claim that if $x_{0} \not\in \set{\bot,\top}$, then
  $\nu_{x_{0}}$ is not central.
  Observe that
  \begin{align*}
    \nu_{x_{0}}(f(x)) &=
    \begin{cases}
      \bot\,, & f(x) \leq x_{0}\,,\\
      f(x)\,, & \text{otherwise}\,,
    \end{cases}
    &
    f(\nu_{x_{0}}(x)) &=
    \begin{cases}
      \bot\,, & x \leq x_{0}\,,\\
      f(x)\,, & \text{otherwise}\,.
    \end{cases} 
  \end{align*}
  It follows that if $\nu_{x_{0}} \circ f = f \circ \nu_{x_{0}}$, then
  $f(x_{0}) \leq x_{0}$.  Indeed, if $f(x_{0}) \not\leq x_{0}$, then
  $f(x_{0}) \neq \bot$, $\nu_{x_{0}}(f(x_{0})) = f(x_{0}) \neq \bot$,
  and $f(\nu_{x_{0}}(x_{0})) = \bot$.  Now, if
  $x_{0} \not\in \set{\bot,\top}$, then $c_{\top}$ is such that
  $x_{0} < \top = c_{\top}(x_{0})$, and therefore
  $\nu_{x_{0}} \circ c_{\top} \neq c_{\top} \circ \nu_{x_{0}}$.
\end{Proof}

It is now possible to argue that, for a complete lattice $L$, there
exists at most one extension of $\QL$ to an \iqe as follows.
Suppose that $\QL$ is involutive, so let $(\intfun)^{\star}$ be a
fixed \iqe structure. We shall argue that $id_{L}^{\star}$ is the
unique cyclic and dualizing element of $\QL$.
If $\alpha$
is an arbitrary cyclic and dualizing element of $\QL$, then
$\beta \eqdef \alpha^{\star}$ is central and codualizing and
$\beta \in \set{c_{\bot},id_{L}}$ using Proposition~\ref{prop:center}.
Since $\beta$ is codualizing, then it is an injective function: if
$\beta(x) = \beta(y)$, then $\beta \circ c_{x} = \beta \circ c_{y}$
and
$c_{x} = \beta \below (\beta \circ c_{x}) = \beta \below (\beta \circ
c_{y}) = c_{y}$; since the mapping sending $t$ to $c_{t}$ is an
embedding, we obtain $x = y$. Thus, if $L$ is not trivial,
$\beta \neq c_{\bot}$ (since $c_{\bot}$ is constant). Whether or not
$L$ is trivial, we derive $\beta = id_{L}$. It follows that
$\alpha = \alpha^{\star}{}^{\star} = \beta^{\star} = id_{L}^{\star}$.

\section{Raney's transforms}
\label{sec:raney}

Let $L, M$ be two complete lattices. For $f : L \rto M$, define
\begin{align*}
  \joinof{f}(x) & \eqdef \bigvee_{\Cond{t}{x}} f(t)\,,
  &
  \meetof{f}(x) & \eqdef \bigwedge_{\Cond{x}{t}} f(t)\,, \quad
  \text{for each $x \in L$.}
\end{align*}
We call $\joinof{f}$ and $\meetof{f}$ the \emph{Raney's transforms} of
$f$.  Notice that $f$ is not required to be monotone in order to
define $\joinof{f}$ or $\meetof{f}$ which, on the other hand, are
easily seen to be monotone; these functions are even join and \mc,
respectively, as argued in the next lemma.
\begin{lemma}
  For any $f : L \rto M$, define
  \begin{align}
    \label{eq:defgf}
    g_{f}(y) & \eqdef \bigwedge \set{z \mid \Cond{y}{f(z)}}\,.
  \end{align}
  Then $g_{f}$ is \radj to $\joinof{f}$ and therefore $\joinof{f}$ is
  \jc. Dually, $\meetof{f}$ is \mc.
\end{lemma}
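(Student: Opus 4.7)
The plan is to verify the adjointness relation $\joinof{f}(x) \leq y$ iff $x \leq g_{f}(y)$ directly from the definitions; once this is established, being a \radj, $\joinof{f}$ preserves all suprema and is therefore \jc. The dual statement about $\meetof{f}$ follows by a symmetric argument.

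First I would unfold $\joinof{f}(x) \leq y$. Since $\joinof{f}(x) = \bigvee_{\Cond{t}{x}} f(t) = \bigvee_{x \not\leq t} f(t)$, this inequality is equivalent to the universally quantified statement: for every $t \in L$ with $x \not\leq t$, we have $f(t) \leq y$. Taking the contrapositive of the implication inside the quantifier, the same statement reads: for every $z \in L$, if $f(z) \not\leq y$ then $x \leq z$. This last formulation says exactly that $x$ is a lower bound of the set $\set{z \mid f(z) \not\leq y} = \set{z \mid \Cond{y}{f(z)}}$, which is equivalent to $x \leq \bigwedge \set{z \mid \Cond{y}{f(z)}} = g_{f}(y)$. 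This chain of equivalences establishes the adjunction $\joinof{f} \dashv g_{f}$, and consequently $\joinof{f}$ preserves all joins.

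For the dual statement, I would mirror the argument by exhibiting a \ladj $h_{f}$ to $\meetof{f}$, defined by $h_{f}(y) \eqdef \bigvee \set{t \mid \Cond{f(t)}{y}} = \bigvee \set{t \mid y \not\leq f(t)}$; the same kind of contrapositive unfolding shows $h_{f}(y) \leq x$ iff $y \leq \meetof{f}(x)$, so $\meetof{f}$ preserves all meets. Alternatively, one may observe that applying the first half of the lemma to the map $f$ viewed between the opposite lattices $\dual{L} \rto \dual{M}$ yields the dual claim at once, since under this reversal the defining condition $\Cond{t}{x}$ for $\joinof{f}$ becomes the condition $\Cond{x}{t}$ defining $\meetof{f}$.

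There is no real obstacle here: the only subtlety is the bookkeeping of the strict-incomparability predicates $\Cond{\intfun}{\intfun}$, which must be handled carefully to see that the contrapositive reshuffle takes $\set{t \mid x \not\leq t}$ appearing as an index set to $\set{z \mid f(z) \not\leq y}$ appearing as the set being $\bigwedge$-ed in the definition of $g_{f}$. Once the equivalence is spelled out symbol-by-symbol the lemma is immediate.
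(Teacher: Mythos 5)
Your proof is correct and is precisely the argument the paper intends (the lemma is stated there without proof): unfolding $\joinof{f}(x)\leq y$ into a universally quantified implication, taking the contrapositive, and reading off that $x$ is a lower bound of $\set{z \mid f(z)\not\leq y}$ gives the adjunction, and the dual half follows by the symmetric computation or by passing to opposite lattices. One cosmetic slip: $\joinof{f}$ is the \emph{left} adjoint of the pair (it \emph{has} a \radj, namely $g_{f}$), which is exactly why it preserves all suprema.
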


We call the operation $\JOINOF$ Raney's transform for the following
reason.
For $\theta \subseteq L \times M$ an arbitrary relation, Raney
\cite{Raney60} defined (up to some dualities) 
\begin{align}
  r_{\theta}(x) & \eqdef \bigwedge \set{y \in M \mid \forall
    (t,v). \,(t,v) \in \theta \,\timplies\, x \leq t \,\tor\, v \leq y
  }\,.  \label{eq:raneysDef}
\end{align}
Recall that a \ladj $\ell : L \rto M$ can be expressed from its \radj
$\rho : M \rto L$ by the formula
$\ell(x) = \bigwedge\set{y \mid x\leq \rho(y)}$. Using this expression
with $\ell = \joinof{f}$ and $\rho = g_{f}$ defined
in~\eqref{eq:defgf}, we obtain
\begin{align}
  \label{eq:joinofViaraneysDef}
  \joinof{f}(x) & = \bigwedge \set{ y \in M \mid \forall t. \,f(t)
    \not\leq y\, \timplies \,x \leq t}\,.
\end{align}
Clearly, if in \eqref{eq:raneysDef} we let $\theta$ be the graph of
$f$, defined by $(t,v) \in \theta$ if and only if $f(t) = v$, then we
obtain equality between the right-hand sides of \eqref{eq:raneysDef}
and \eqref{eq:joinofViaraneysDef}, and so $\joinof{f} = r_{\theta}$.

We list next the few properties we need to know about these
transforms. 
\begin{lemma}
  \label{lemma:lemmaMon}
  \label{lemma:naturalOne}
  The transform $\funJoinOf$ has the following properties:
  \begin{enumerate}
  \item if $f \leq g : L \rto M$, then $\joinof{f} \leq \joinof{g}$,
  \item if $g : L \rto M$ and $f : M \rto N$ is monotone, then
    $\joinof{(f \circ g)} \leq f \circ (\joinof{g})$,
  \item if $g : L \rto M$ and $f : M \rto N$ is \jc, then
    $\joinof{(f \circ g)} = f \circ (\joinof{g})$,
  \item if $f : L \rto M$ is \jc (with $L$ and $M$ complete), then
   \begin{align}
    \label{eq:commUno}
    \la{\meetof{f}} & = \joinof{\ra{f}} : M \rto L\,.
  \end{align}
  \end{enumerate}
\end{lemma}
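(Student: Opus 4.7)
The plan is to dispatch items (1)--(3) by direct, pointwise manipulation of the defining formula $\joinof{f}(x) = \bigvee_{t \not\leq x} f(t)$, and then reduce (4) to the uniqueness of adjoints by exhibiting $\joinof{\ra{f}}$ as a left adjoint to $\meetof{f}$.

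For (1), if $f \leq g$ pointwise, then for each $x \in L$ the family indexed by $\set{t \mid t \not\leq x}$ is pointwise larger for $g$, so the suprema compare accordingly. For (2) and (3), I would simply compute
\begin{align*}
  \joinof{(f \circ g)}(x) & = \bigvee_{t \not\leq x} f(g(t))\,.
\end{align*}
Monotonicity of $f$ yields $f(g(t)) \leq f(\joinof{g}(x))$ for each $t \not\leq x$, giving (2). When $f$ is \jc, it preserves this supremum, so the inequality becomes the equality of (3).

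The real content is (4). Both $\la{\meetof{f}}$ and $\joinof{\ra{f}}$ are \jc maps $M \rto L$, so by uniqueness of left adjoints it suffices to check that $\joinof{\ra{f}}$ is left adjoint to $\meetof{f}$, i.e.\ that $\joinof{\ra{f}}(y) \leq x$ iff $y \leq \meetof{f}(x)$, for all $x \in L$ and $y \in M$. Unfolding the definitions, the left-hand side reads: for every $t \in M$ with $t \not\leq y$, $\ra{f}(t) \leq x$; and the right-hand side reads: for every $s \in L$ with $x \not\leq s$, $y \leq f(s)$. I would verify the two implications by contraposition, in each case using the adjunction $f \dashv \ra{f}$. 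If the left condition holds and $x \not\leq s$, then assuming $y \not\leq f(s)$ and setting $t \eqdef f(s)$ gives $t \not\leq y$, hence $s \leq \ra{f}(f(s)) = \ra{f}(t) \leq x$, a contradiction. Conversely, if the right condition holds and $t \not\leq y$, then assuming $\ra{f}(t) \not\leq x$ and setting $s \eqdef \ra{f}(t)$ gives $x \not\leq s$, hence $y \leq f(s) = f(\ra{f}(t)) \leq t$, again a contradiction.

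I expect (4) to be the only delicate step, and the main obstacle there is simply keeping the four quantified conditions and the direction of the adjunction straight; the argument is otherwise a one-line use of the unit and counit $s \leq \ra{f}(f(s))$ and $f(\ra{f}(t)) \leq t$. No appeal to complete distributivity or to Raney's characterization is needed for this lemma.
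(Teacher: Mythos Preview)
Your overall strategy is right, but you have the indexing condition in the Raney transforms reversed. In the paper, $\joinof{f}(x) = \bigvee_{x \not\leq t} f(t)$ and $\meetof{f}(x) = \bigwedge_{t \not\leq x} f(t)$; this is what makes $\joinof{id_{L}} = o_{L}$. You work throughout with $\bigvee_{t \not\leq x} f(t)$ instead. For items (1)--(3) this slip is harmless, since those pointwise arguments are insensitive to the direction of the condition. For (4) it is not: your step ``assuming $y \not\leq f(s)$ and setting $t \eqdef f(s)$ gives $t \not\leq y$'' is a non-sequitur (it only gives $y \not\leq t$), and the dual step in the converse direction has the same defect. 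In fact, with your swapped definitions the adjunction you are trying to verify is false already for $f = id$ on a four-element diamond.

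With the correct conditions, the argument you outline goes through verbatim: $\joinof{\ra{f}}(y) \leq x$ unfolds to ``$\ra{f}(t) \leq x$ whenever $y \not\leq t$'', and $y \leq \meetof{f}(x)$ unfolds to ``$y \leq f(s)$ whenever $s \not\leq x$''; the two contrapositive implications then use precisely the unit $s \leq \ra{f}(f(s))$ and counit $f(\ra{f}(t)) \leq t$, as you intended. The paper does not actually prove (4) but defers to Higgs--Rowe, so your direct verification via uniqueness of left adjoints is a clean self-contained replacement once the indices are fixed.
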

The proof of these properties does  not present difficulties, possibly
apart for the last item, for which we refer the reader to
\cite[Proposition 4.6 (b.iii)]{HiggsRowe1989}.

\section{\CDLatt is an \iq}
\label{sec:iqantaloid}
We prove now that $\CDLatt$, the full subcategory of \SLatt whose
objects are the \cd lattices, is an \iq. By the results of
Section~\ref{sec:lookingForZero}, this is also the largest full
subcategory of \SLatt with this property.

Recall from Theorem~\ref{thm:raney} that a \clatt is \cd if and only
if $\joinof{\omega_{L}} = id_{L}$ (or, equivalently,
$\meetof{o_{L}} = id_{L}$).  
\begin{lemma}
  \label{lemma:interior}
  If $L$ is a \cdlatt and $f : L \rto M$ is monotone, then
  $\int(f) = \joinof{(f \circ \omega_{L})}$ and
  $\joinof{f} = \int(f \circ o_{L})$.
\end{lemma}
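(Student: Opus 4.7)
The plan is to prove both identities via the defining universal property of $\int(\cdot)$. Since both $\joinof{(f \circ \omega_L)}$ and $\joinof{f}$ are \jc by Lemma~\ref{lemma:lemmaMon}, it suffices, in each case, to show that the Raney transform in question is bounded above by the target map ($f$ in the first identity, $f \circ o_L$ in the second) and that every \jc map below the target lies below the Raney transform.

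For the first identity, the bound $\joinof{(f \circ \omega_L)} \leq f$ reduces to observing that $\omega_L(t) \leq x$ whenever $x \not\leq t$, which is immediate from the definition of $\omega_L(t)$ as an infimum taken over the set $\set{s \mid s \not\leq t}$; monotonicity of $f$ then yields $f(\omega_L(t)) \leq f(x)$, and passing to the supremum over such $t$ gives the bound. For the maximality, given any \jc map $h \leq f$, complete distributivity of $L$---in the form $\joinof{\omega_L} = id_L$ from Raney's Theorem~\ref{thm:raney}---lets me expand
\begin{align*}
  h(x) = h\bigl(\bigvee_{\Cond{t}{x}} \omega_L(t)\bigr) = \bigvee_{\Cond{t}{x}} h(\omega_L(t)) \leq \bigvee_{\Cond{t}{x}} f(\omega_L(t)) = \joinof{(f \circ \omega_L)}(x)\,,
\end{align*}
where the second equality uses \jc of $h$. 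Specializing to $h = \int(f)$ closes the first identity.

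The second identity is entirely parallel. One has $\joinof{f} \leq f \circ o_L$ pointwise because $x \not\leq t$ forces $t \leq o_L(x)$, and then monotonicity of $f$ gives the bound. For the reverse, setting $h \eqdef \int(f \circ o_L)$, the same decomposition yields
\begin{align*}
  h(x) = \bigvee_{\Cond{t}{x}} h(\omega_L(t)) \leq \bigvee_{\Cond{t}{x}} f(o_L(\omega_L(t))) \leq \bigvee_{\Cond{t}{x}} f(t) = \joinof{f}(x)\,,
\end{align*}
the last step using monotonicity of $f$ together with the counit inequality $o_L \circ \omega_L \leq id_L$ of the adjunction $o_L \dashv \omega_L$.

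There is no substantial obstacle here; the argument is essentially a bookkeeping exercise. The only real content is that \cd of $L$, as packaged by Raney's identities and by the adjunction $o_L \dashv \omega_L$, is exactly the tool that lets a \jc map $h$ be probed at the family $\set{\omega_L(t) \mid x \not\leq t}$, whose join recovers $x$, so that monotonicity of $f$ alone completes each comparison. Alternatives via the naturality clauses of Lemma~\ref{lemma:lemmaMon} are not available because neither $\omega_L$ nor the outer $f$ is assumed \jc.
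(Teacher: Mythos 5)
Your proof is correct and follows essentially the same route as the paper's: both directions of each identity rest on Raney's identity $\joinof{\omega_{L}} = id_{L}$ and the counit $o_{L} \circ \omega_{L} \leq id_{L}$, and you merely unpack pointwise what the paper phrases via the transform calculus of Lemma~\ref{lemma:lemmaMon}. One small correction to your closing aside: clause (2) of Lemma~\ref{lemma:lemmaMon} requires only the \emph{outer} map to be monotone (the inner one is arbitrary), so the inequality $\joinof{(f \circ \omega_{L})} \leq f \circ \joinof{\omega_{L}}$ is available after all, and it is in fact how the paper argues.
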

\begin{Proof}
  By monotonicity of $f$, we have
  $\joinof{(f \circ \omega_{L})} \leq f \circ (\joinof{\omega_{L}}) =
  f$.  Suppose that $g$ is \jc and $g \leq f$. Then
  $g = g \circ (\joinof{\omega_{L}}) = \joinof{(g \circ \omega_{L})}
  \leq \joinof{(f \circ \omega_{L})}$.
  To see that $\joinof{f} = \int(f\circ o_{L})$, observe that
  $\joinof{f}  = \joinof{(f \circ id)} \leq f \circ (\joinof{id_{L}})
  = f \circ o_{L}$,
  and therefore $\joinof{f} \leq \int(f \circ o_{L})$.  On the other
  hand,
  $\int(f \circ o_{L}) = \joinof{(f \circ o_{L} \circ \omega_{L})}
  \leq \joinof{f}$, using the conunit of the adjunction,
  $o_{L} \circ \omega_{L} \leq id_{L}$.
\end{Proof}

The interior operator so defined is quite peculiar, since
for $g : L \rto M $ monotone and $f : M \rto N$ \jc,
we have
\begin{align*}
  \int(f \circ g) & = \joinof{(f \circ g \circ \omega_{L})}
  = f \circ \joinof{(g \circ \omega_{L})} = f \circ \int(g)\,.
\end{align*}
In general, if $L$ is not a \cdlatt, then we would have, above, only
an inequality, since
$\int(f \circ g) \geq \int(f) \circ \int(g) = f \circ \int(g)$.

\begin{lemma}
  \label{lem:inverse}
  If $L$ is a \cdlatt and $f : L \rto M$ is \jc, then
  $f = \joinof{\meetof{f}}$.
\end{lemma}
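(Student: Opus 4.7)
The plan is to prove the equality $f = \joinof{\meetof{f}}$ by establishing the two inequalities separately. The inequality $\joinof{\meetof{f}} \leq f$ is formal: it holds for every map $f : L \to M$ between complete lattices, regardless of \jc-ness or \cd. The reverse inequality $f \leq \joinof{\meetof{f}}$ is where both hypotheses (\cd of $L$, \jc of $f$) enter, via Raney's identity~\eqref{eq:omegao}.

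For the easy direction, fix $x \in L$. For each $t$ with $x \not\leq t$, the element $x$ itself lies in the set $\set{s \mid s \not\leq t}$ over which the meet defining $\meetof{f}(t)$ is taken. Hence $\meetof{f}(t) \leq f(x)$. Taking the supremum over all such $t$ yields
\begin{align*}
  \joinof{\meetof{f}}(x) & = \bigvee_{\Cond{t}{x}} \meetof{f}(t) \leq f(x)\,.
\end{align*}
No assumption on $f$ or $L$ is needed for this step.

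For the hard direction, I would start from Raney's characterization (Theorem~\ref{thm:raney}): since $L$ is \cd, we have $x = \bigvee_{\COND{x}{t}} \omega_{L}(t)$. Applying $f$ and using that $f$ is \jc gives $f(x) = \bigvee_{\COND{x}{t}} f(\omega_{L}(t))$. Next, note that $\omega_{L}(t) = \bigwedge_{\Cond{s}{t}} s$ by definition, so by monotonicity of $f$,
\begin{align*}
  f(\omega_{L}(t)) & \leq \bigwedge_{\Cond{s}{t}} f(s) = \meetof{f}(t)\,.
\end{align*}
Combining, $f(x) \leq \bigvee_{\COND{x}{t}} \meetof{f}(t) = \joinof{\meetof{f}}(x)$, which closes the proof.

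The argument is short and there is no real obstacle: the one substantive observation is that Raney's identity lets us re-express $x$ as a sup of values of $\omega_L$, at which $f$ is easy to bound by $\meetof{f}$ via monotonicity. An alternative route would go through Lemma~\ref{lemma:interior}, rewriting $\joinof{\meetof{f}} = \int(\meetof{f}\circ o_{L})$ and checking $\meetof{f}\circ o_{L} \geq f$ pointwise, but the direct calculation above is more transparent and uses the same ingredients.
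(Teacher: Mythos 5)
Your proof is correct and is essentially the paper's own argument written out pointwise: the easy inequality is identical, and your hard direction---expanding $x$ via Raney's identity, pushing $f$ through the join by \jc-ness, and bounding $f(\omega_{L}(t))\leq\meetof{f}(t)$ by monotonicity---is exactly the paper's chain $f = f\circ(\joinof{\omega_{L}}) = \joinof{(f\circ\omega_{L})} \leq \joinof{\meetof{f}}$ with the Raney transforms unfolded. (One notational slip: the index sets for $\omega_{L}(t)$ and $\meetof{f}(t)$ should both be $\set{s \mid \Cond{t}{s}}$, i.e.\ $s\not\leq t$ rather than $t\not\leq s$; your argument clearly intends the former, and with that reading every step is sound.)
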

\begin{Proof}
  We firstly show that $\joinof{\meetof{f}} \leq f$.
  If $\COND{x}{t}$, then $\meetof{f}(t) = \bigwedge_{\COND{u}{t}} f(u)
  \leq f(x)$ and therefore
  $\joinof{\meetof{f}}(x) = \bigvee_{\COND{x}{t}}
  \meetof{f}(t) \leq f(x)$, for all $x \in L$.
  Let us argue that $f \leq \joinof{\meetof{f}}$:
  \begin{align*}
    f & = f \circ id_{L} = f \circ (\joinof{\omega_{L}}) =
    \joinof{(f
      \circ \omega_{L})}
    \leq \joinof{\meetof{f}}\,,
  \end{align*}
  where we have used the fact, dual to the relation
  $\joinof{f} = \int(f \circ o_{L})$ established in
  Lemma~\ref{lemma:interior}, that $\meetof{f}$ is the least \mc
  function above $f \circ \omega_{L}$, so in particular
  $f \circ \omega_{L} \leq \meetof{f}$.
\end{Proof}

For $ f : L \rto M$ \jc, define $f^{\star_{L,M}} : M \rto L$ as
follows:
\begin{align*}
  f^{\star_{L,M}} & \eqdef \joinof{\rho(f)} = \ell(\meetof{f})\,.
\end{align*}
Let us remark that the mappings $\STAR$ so defined are the maps
witnessing that \cdlatt{s} are nuclear, see \cite[Theorem
4.7]{HiggsRowe1989}. We leave for future research to establish an
exact connection between the notions of \iq and of nuclear object in
an autonomous category.
\begin{theorem}
  \label{thm:converse}
  The operations $(\intfun)^{\star_{L,M}}$ so defined yield an \iq
  structure on \CDLatt.
\end{theorem}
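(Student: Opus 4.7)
My plan is to verify, in order, that $\CDLatt$ carries a quantaloid structure closed under the $\star$ operations, that the involution axiom $(f^{\star})^{\star} = f$ holds, and that the biconditional characterization of $\leq$ in terms of $\circ$ and $0_{L} := (id_{L})^{\star}$ holds on both sides. Closure is easy: $\CDLatt(L,M) = \SLatt(L,M)$ is already a complete lattice with composition distributing over suprema, each $f^{\star} = \joinof{\ra{f}}$ is \jc by the opening lemma of Section~\ref{sec:raney}, and a direct computation gives $0_{M} = \joinof{\ra{id_{M}}} = \joinof{id_{M}} = o_{M}$.

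For the involution axiom I would use the alternate expression $f^{\star} = \la{\meetof{f}}$. Since $\meetof{f}$ is \mc, adjointness gives $\ra{\la{\meetof{f}}} = \meetof{f}$, so $(f^{\star})^{\star} = \joinof{\ra{\la{\meetof{f}}}} = \joinof{\meetof{f}}$, which equals $f$ by Lemma~\ref{lem:inverse} (using that $L$ is \cd).

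The heart of the argument is the equivalence $f \leq g$ iff $f \circ g^{\star} \leq 0_{M}$. Since $f$ is \jc, item~3 of Lemma~\ref{lemma:naturalOne} gives $f \circ g^{\star} = f \circ \joinof{\ra{g}} = \joinof{(f \circ \ra{g})}$, so the condition becomes $\joinof{(f \circ \ra{g})} \leq \joinof{id_{M}} = o_{M}$. The key auxiliary fact is that, for any monotone self-map $h$ on a \cdlatt $M$, $\joinof{h} \leq o_{M}$ is equivalent to $h \leq id_{M}$: $(\Leftarrow)$ is monotonicity of $\JOINOF$, while $(\Rightarrow)$ observes that if $\joinof{h} \leq o_{M}$ then for every $t$, $h(t) \leq \bigwedge_{\COND{y}{t}} o_{M}(y) = t$ by Raney's identity in Theorem~\ref{thm:raney}. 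Combined with the elementary $\SLatt$ fact that $f \leq g$ iff $f \circ \ra{g} \leq id_{M}$ (the counit $g \circ \ra{g} \leq id_{M}$ and the unit $id_{L} \leq \ra{g} \circ g$ supply the two directions), this closes the first biconditional.

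For the dual equivalence $f \leq g$ iff $g^{\star} \circ f \leq 0_{L}$ I would instead use $g^{\star} = \la{\meetof{g}}$ and the adjunction $\la{\meetof{g}} \dashv \meetof{g}$ to obtain $g^{\star} \circ f \leq o_{L}$ iff $f \leq \meetof{g} \circ o_{L}$. By Lemma~\ref{lem:inverse}, $g = \joinof{\meetof{g}}$, which by Lemma~\ref{lemma:interior} equals $\int(\meetof{g} \circ o_{L})$; hence for a \jc $f$ the inequality $f \leq \meetof{g} \circ o_{L}$ is just $f \leq g$. The main obstacle is threading the several adjoint-pair manoeuvres involving $\ra{\cdot}$, $\la{\cdot}$, $\joinof{\cdot}$, $\meetof{\cdot}$, and $\int$ without slipping on types; the conceptually hardest single step, however, is the Raney-based reduction from $\joinof{h} \leq o_{M}$ to $h \leq id_{M}$, which is precisely where complete distributivity of every object of $\CDLatt$ is essential.
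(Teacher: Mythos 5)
Your proof is correct, and it uses the same core machinery as the paper (the definition $f^{\star}=\joinof{\ra{f}}=\la{\meetof{f}}$, Lemma~\ref{lemma:naturalOne}, Lemma~\ref{lemma:interior}, Lemma~\ref{lem:inverse}, and the identification $id_{L}^{\star}=o_{L}$); the involution axiom is handled identically. The one genuine divergence is in the two inequational characterizations of $f\leq g$. The paper proves only the ``left'' one directly---from $h^{\star}=\int(\ra{h}\circ o_{L})$ one gets $h\circ f\leq o_{L}$ iff $f\leq \ra{h}\circ o_{L}$ iff $f\leq h^{\star}$, and then sets $h=g^{\star}$---and obtains the ``right'' one for free by observing that $\STAR$ is an order-reversing involution, so $f\leq g$ iff $g^{\star}\leq f^{\star}$ iff $f\circ g^{\star}\leq o_{M}$. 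You instead prove both sides directly, and for the $f\circ g^{\star}\leq o_{M}$ side you introduce an auxiliary reduction not in the paper: for monotone $h:M\rto M$ with $M$ \cd, $\joinof{h}\leq o_{M}$ iff $h\leq id_{M}$, which you derive straight from Raney's identity $\bigwedge_{\COND{y}{t}} o_{M}(y)=t$ of Theorem~\ref{thm:raney}, combined with $f\leq g$ iff $f\circ\ra{g}\leq id_{M}$. Your second equivalence ($g^{\star}\circ f\leq o_{L}$ iff $f\leq\meetof{g}\circ o_{L}$ iff $f\leq\int(\meetof{g}\circ o_{L})=\joinof{\meetof{g}}=g$) is essentially the paper's direct argument in adjoint-transposed form. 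What the paper's route buys is economy---one computation plus a formal duality; what yours buys is a symmetric, self-contained verification of both conditions and an isolated statement of exactly where complete distributivity of the codomain enters, which is a reusable observation. All type-checks and uses of adjunctions in your argument are sound.
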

\begin{Proof}
  Firstly, we verify that $\Star{\Star{f}} = f$ using
  Lemmas~\ref{lemma:naturalOne} and~\ref{lem:inverse}, and the fact
  that the \jc functions are in bijection with \mc functions via
  taking adjoints:
  $\Star{\Star{f}} = \joinof{\rho(\joinof{\rho(f)})} =
  \joinof{\rho(\ell(\meetof{f}))} = \joinof{\meetof{f}} = f$.

  We now verify that $\STAR$ satisfies the constraints needed to have
  an \iq. Let us remark that
  $id_{L}^{\star} = \joinof{\rho(id_{L})} = \joinof{id_{L}} = o_{L}$.

  Observe that since $\STAR$ is defined by composing an order
  reversing and an order preserving function, it is order
  reversing. Since it is an involution, then $f \leq g$ if and only if
  $g^{\star} \leq f^{\star}$.

  Now we assume that $f : L \rto M$ and $h : M \rto L$ and recall (see
  Lemma~\ref{lemma:interior}) that
  $h^{\star_{M,L}} = \joinof{\rho(h)} = \int(\rho(h) \circ o_{L}) : L
  \rto M$.  Therefore, $h \circ f \leq o_{L}$ if and only if
  $f \leq \rho(h) \circ o_{L}$, if and only if
  $f \leq \int(\rho(h) \circ o_{L}) = h^{\star}$.  Therefore, if
  $g : L \rto M$, then (letting $h = g^{\star}$) $f \leq g$ if and
  only if $g^{\star} \circ f \leq o_{L}$. Then, also, $f \leq g$ if
  and only if $g^{\star} \leq f^{\star}$ if and only if
  $ f \circ g^{\star} \leq o_{M}$.
\end{Proof}

Putting together Theorems~\ref{thm:direct} and~\ref{thm:converse}, we
obtain the following generalization of Proposition~\propno in
\cite{EGHK2018}, where no mention of the choice of the cyclic
dualizing element is required.
\begin{corollary}
  \label{cor:iff}
  The quantale $\QL$ is involutive if and only if $L$ is a \cdlatt.
\end{corollary}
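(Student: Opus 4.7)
The plan is to derive this biconditional as a straightforward synthesis of results already established, rather than as a new computation. The forward implication is essentially Corollary~\ref{cor:iff}'s restatement of what follows from Theorems~\ref{thm:twocyclic} and~\ref{thm:direct}, while the converse is what Theorem~\ref{thm:converse} delivers once one views $\QL$ as the endo-homset $\CDLatt(L,L)$.

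For the forward direction, I would assume that $\QL$ is an \iqe with involution $\fun{\star}$. The element $id_L^{\star}$ is then a cyclic dualizing element of $\QL$. First I would apply Theorem~\ref{thm:twocyclic} to conclude $id_L^{\star} \in \set{c_{\top}, o_L}$. Next, Lemma~\ref{lemma:cTopNotDualizing} rules out the possibility $id_L^{\star} = c_{\top}$ whenever $L$ is nontrivial, since $c_{\top}$ fails to be dualizing. This forces $id_L^{\star} = o_L$, so $o_L$ is cyclic and distinct from $c_{\top}$, and Theorem~\ref{thm:direct} then yields complete distributivity of $L$. The degenerate case where $L$ is the one-element lattice must be noted separately, but there it is immediate since $\QL$ collapses and $L$ is trivially \cd.

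For the converse, I would invoke Theorem~\ref{thm:converse}: if $L$ is a \cdlatt, then $L$ is an object of the \iq \CDLatt, and the operation $\fun{\star_{L,L}}$ restricts to the endo-homset $\CDLatt(L,L) = \QL$, turning it into a one-object \iq, which by definition is an \iqe. One should also observe, for the sake of stating the result sharply, that by the earlier corollary following Theorem~\ref{thm:direct} the involution $\fun{\star_{L,L}}$ is in fact the only one extending the quantale structure, with $id_L^{\star} = o_L$.

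There is no genuine technical obstacle here: all the hard work has already been carried out in Sections~\ref{sec:lookingForZero} and~\ref{sec:iqantaloid}. The only subtlety worth flagging is the handling of the trivial lattice in the forward direction, where $c_{\top} = c_{\bot}$ and the dichotomy supplied by Theorem~\ref{thm:twocyclic} degenerates; this case is disposed of by inspection.
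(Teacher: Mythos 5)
Your proof is correct and takes essentially the same route as the paper, which likewise obtains the forward direction from Theorem~\ref{thm:twocyclic}, Lemma~\ref{lemma:cTopNotDualizing} and Theorem~\ref{thm:direct} (packaged as the two corollaries of Section~\ref{sec:lookingForZero}) and the converse by specializing Theorem~\ref{thm:converse} to the one-object case. Your explicit handling of the trivial lattice is a harmless extra precaution, not a departure from the paper's argument.
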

For $f : L \rto M$ and $g : M \rto N$ (with $L,M,N$ \cdlatt{s}), let
us define
\begin{align*}
  g \oplus f & \eqdef (f^{\star} \circ g^{\star})^{\star} : L \rto M\,,
\end{align*}
and observe that
\begin{align*}
  (g \oplus f ) & = (f^{\star} \circ g^{\star})^{\star} =
  \joinof{\ra{(\la{\meetof{f}} \circ \la{\meetof{g}})}} =
  \joinof{(\meetof{g} \circ \meetof{f})}\,.
\end{align*}
That is:
\begin{proposition}
  The dual quantaloid structure arises via Raney's transforms from the
  composition
  $\SLatt(L^{\partial},M^{\partial}) \times
  \SLatt(M^{\partial},N^{\partial}) \rto
  \SLatt(L^{\partial},N^{\partial})$.
\end{proposition}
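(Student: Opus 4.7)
The plan is to compute $g \oplus f \eqdef (f^{\star} \circ g^{\star})^{\star}$ directly, by unfolding the definition $h^{\star} = \joinof{\ra{h}} = \la{\meetof{h}}$ from the beginning of Section~\ref{sec:iqantaloid}, and to read off that the result equals $\joinof{(\meetof{g} \circ \meetof{f})}$. Since $\meetof{f}$ and $\meetof{g}$ are \mc maps, and such maps are in bijection, via taking adjoints, with \jc maps between the opposite lattices, this exhibits $\oplus$ as the transport via Raney's transforms of ordinary composition in $\SLatt$ on the duals.

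First I would apply the outer $\star$ via $h^{\star} = \joinof{\ra{h}}$, with $h = f^{\star} \circ g^{\star} : N \rto L$, and use the contravariance of right adjoints to obtain
\begin{align*}
  (f^{\star} \circ g^{\star})^{\star}
  & = \joinof{\ra{f^{\star} \circ g^{\star}}}
  = \joinof{(\ra{g^{\star}} \circ \ra{f^{\star}})}\,.
\end{align*}
Using the alternative form $f^{\star} = \la{\meetof{f}}$ together with the identity $\ra{\la{k}} = k$ recalled in Section~\ref{sec:elements}, I would simplify $\ra{f^{\star}} = \meetof{f}$ and symmetrically $\ra{g^{\star}} = \meetof{g}$, which together give
\begin{align*}
  g \oplus f & = \joinof{(\meetof{g} \circ \meetof{f})}\,.
\end{align*}

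To conclude, I would match this expression to the statement of the proposition: the \mc map $\meetof{f} : L \rto M$ is the right adjoint of a unique \jc map $L^{\partial} \rto M^{\partial}$, and analogously for $\meetof{g}$; under this bijection, $\meetof{g} \circ \meetof{f}$ is exactly the composite in $\SLatt(L^{\partial},N^{\partial})$ of the morphisms corresponding to $f$ and $g$. The Raney transform $\funJoinOf$ then carries this composite to $g \oplus f$, which is precisely what the proposition asserts. I do not foresee a real obstacle: the computation sits entirely inside Lemma~\ref{lemma:naturalOne} and the adjunction machinery of Section~\ref{sec:elements}. The only point demanding care is to keep the variance consistent, since both $\STAR$ and $\ra{\intfun}$ reverse the direction of arrows, and it is their combined effect which lands the composition in the expected order $\meetof{g} \circ \meetof{f}$ rather than $\meetof{f} \circ \meetof{g}$.
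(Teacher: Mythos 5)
Your computation is exactly the one the paper gives: it unfolds $(f^{\star}\circ g^{\star})^{\star}$ as $\joinof{\ra{(\la{\meetof{f}}\circ\la{\meetof{g}})}}$ and simplifies, via contravariance of $\ra{\intfun}$ and the identity $\ra{\la{k}}=k$, to $\joinof{(\meetof{g}\circ\meetof{f})}$, which is the paper's displayed derivation verbatim. One small wording slip in your last paragraph: the \mc map $\meetof{f}:L\rto M$ is the right adjoint of the \jc map $f^{\star}:M\rto L$ (equivalently, it simply \emph{is} a \jc map $L^{\partial}\rto M^{\partial}$ when the orders are reversed), not the right adjoint of a \jc map $L^{\partial}\rto M^{\partial}$; this does not affect the correctness of the argument.
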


\section{Remarks on the equational theory of the $\QL$}
\label{sec:eqTheory}

We develop in this section few considerations concerning the
equational theory of the \irl{s} $\QL$.  
\begin{theorem}
  \label{thm:mix}
  A \clatt $L$ is a chain if and only if $\QL$ is an \iqe satisfying
  the mix rule, i.e. the inclusion $x \circ y \leq x \oplus y$.
\end{theorem}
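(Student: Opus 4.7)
The plan is to reduce the mix rule to the pointwise inequality $o_L \leq id_L$ on the generators of the involutive quantale, and then to observe that this inequality unfolds directly into the chain condition on $L$.

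First, by Corollary~\ref{cor:iff}, $\QL$ is involutive if and only if $L$ is completely distributive, and in that case the cyclic dualizing element is $0 = o_L$ (while $1 = id_L$). So, assuming $\QL$ is involutive, I would establish that the mix inclusion $x \circ y \leq x \oplus y$ is equivalent to the simple inclusion $0 \leq 1$. For sufficiency, using the standard identity $x^\star \circ x \leq 0$ (which follows directly from $x \leq x^\star \backslash 0 = x$) and $0 \leq 1$, one gets $0 \circ y \leq 1 \circ y = y$; iterating, $y^\star \circ x^\star \circ x \circ y \leq y^\star \circ 0 \circ y \leq y^\star \circ y \leq 0$, which by adjunction is exactly $x \circ y \leq (y^\star \circ x^\star)^\star = x \oplus y$. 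For necessity, instantiating mix at $x = 0$, $y = 1$ gives $0 = 0 \circ 1 \leq 0 \oplus 1 = (1^\star \circ 0^\star)^\star = (0 \circ 1)^\star = 0^\star = 1$.

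Next, I would unpack what $0 \leq 1$ means in $\QL$ with $L$ completely distributive: it is the pointwise statement $o_L(x) \leq x$ for every $x \in L$, i.e.\ $\bigvee \set{t \mid x \not\leq t} \leq x$. This is equivalent to the condition that, for all $x,t \in L$, $x \not\leq t$ implies $t \leq x$, which is precisely the statement that any two elements of $L$ are comparable, i.e.\ that $L$ is a chain.

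The two directions of the theorem then assemble as follows. If $L$ is a chain, then $L$ is completely distributive, so by Corollary~\ref{cor:iff} the quantale $\QL$ is involutive; and the chain condition gives $o_L \leq id_L$, i.e.\ $0 \leq 1$, hence mix holds. Conversely, if $\QL$ is involutive and satisfies mix, then by the equivalence above we have $o_L \leq id_L$, and by the unpacking $L$ is a chain. No step is really an obstacle here: the only thing to be careful about is the direction of the inequality, since in this paper's conventions mix corresponds to $0 \leq 1$ (rather than $1 \leq 0$), consistently with the definition of $o_L$ as $\bigvee \set{t \mid x \not\leq t}$.
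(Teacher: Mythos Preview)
Your proof is correct and follows essentially the same route as the paper's: reduce the mix rule to $0 \leq 1$, identify this in $\QL$ with the pointwise inequality $o_L \leq id_L$, and read off the chain condition; for the converse, invoke that chains are completely distributive so that $\QL$ is involutive. The only difference is one of detail: the paper simply cites the equivalence between mix and $0 \leq 1$ as well known, whereas you spell it out via $x^\star \circ x \leq 0$ and the adjunction---your argument there is fine.
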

\begin{Proof}
  It is well known that the mix rule is equivalent to the inclusion
  $0 \leq 1$---where $1$ is the unit for $\circ$ and $0$ is the unit
  for $\oplus$.  Therefore, an \iqe of the form $\QL$ satisfies the
  mix rule if and only if $o_{L} \leq id_{L}$. This relation is easily
  seen to be equivalent to the statement that if $x \not\leq t$, then
  $t \leq x$, so $L$ is a chain. For the converse, we just need to
  recall that every chain is a \cdlatt.
\end{Proof}

Let us recall that an element $x$ of a \clatt $L$ is 
\emph{\cjp}
if, for every $Y \subseteq L$, the relation $x \leq \bigvee Y$ implies
$x \leq y$ for some $y \in Y$.  It is not difficult to see that $x$ is
\cjp if and only if $x \not\leq o_{L}(x)$.
Thus we say that a \clatt is \emph{smooth} if it has no \cjp
element. For example, the interval $[0,1]$ of the reals is a smooth
\cdlatt. The following statement is an immediate consequence of these
considerations.
\begin{theorem}
  \label{thm:comix}
  A \clatt $L$ is smooth if and only if $id_{L} \leq o_{L}$. Thus, a
  \cdlatt $L$ is smooth if and only if $\QL$ satisfies the inclusion
  $1 \leq 0$ in the language of \irl{s}.
\end{theorem}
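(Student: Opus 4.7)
The proof is essentially a direct consequence of the machinery already in place; the statement is stated as \emph{an immediate consequence of these considerations}, so my plan is to make the two bridges explicit.

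First I would verify the helper observation highlighted in the paragraph before the theorem, namely that $x \in L$ is \cjp if and only if $x \not\leq o_L(x)$. One direction: if $x \leq o_L(x) = \bigvee\set{t \mid t \not\geq x}$, then $x$ is dominated by a supremum none of whose members lies above $x$, so $x$ is not \cjp. Conversely, if $x$ fails to be \cjp, there is some $Y \subseteq L$ with $x \leq \bigvee Y$ but $x \not\leq y$ for all $y \in Y$; each such $y$ then satisfies $y \not\geq x$, so $y \leq o_L(x)$, giving $x \leq \bigvee Y \leq o_L(x)$.

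For the first equivalence in the theorem, smoothness of $L$ says that no $x \in L$ is \cjp, which by the helper observation is equivalent to $x \leq o_L(x)$ for every $x \in L$, i.e.\ to the pointwise inequality $id_L \leq o_L$ in $\QL$.

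For the second equivalence, I would invoke the \iqe structure on $\QL$ obtained, for $L$ a \cdlatt, from Theorem~\ref{thm:converse} (or Corollary~\ref{cor:iff}). The unit of $\circ$ in $\QL$ is $id_L$ (this is the multiplicative $1$), while the unit of the dual operation $\oplus$ is, by definition of an \iq, the cyclic dualizing element $0 = id_L^{\star}$. From the computation $id_L^{\star} = \joinof{\rho(id_L)} = \joinof{id_L} = o_L$ appearing in the proof of Theorem~\ref{thm:converse}, we have $0 = o_L$. Therefore the inclusion $1 \leq 0$ in the language of \irl{s} is exactly $id_L \leq o_L$, and the first equivalence closes the argument.

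There is no real obstacle here; the only point requiring a hint of care is the characterization of \cjp elements via $o_L$, but it is a one-line unfolding of the definitions. Everything else is bookkeeping on top of earlier results.
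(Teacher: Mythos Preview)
Your proposal is correct and matches the paper's intended argument: the paper offers no explicit proof, stating only that the theorem is ``an immediate consequence of these considerations,'' and you have faithfully unpacked those considerations---the characterization of \cjp elements via $x \not\leq o_{L}(x)$ and the identification $0 = id_{L}^{\star} = o_{L}$ from Theorem~\ref{thm:converse}.
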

These statements
generalize the remarks by \blue{Galatos and Jipsen, this collection,}
on the \irl of weakening relations on $P$. Let us recall that this
\irl is isomorphic to $\QDP$ where $\DP$ is the collection of downsets
of $P$. Thus, they observe that $\QDP$ satisfies the mix rule if and
only if $P$ is a chain, and that there are no non-trivial posets $P$
such that $\QDP$ satisfies the inclusion $1 \leq 0$. These facts might
be seen as consequences Theorems~\ref{thm:mix} and~\ref{thm:comix},
considering that $\DP$ is a chain if and only if $P$ is a chain, and
that $\DP$ is spatial, meaning that every element of $\DP$ is the join
of the \cjp elements below it (so, $\DP$ has plenty of \cjp elements).

\bigskip

For a family $\set{f_{i} \in \SLatt(L,M)\mid i \in I} $, let us define
$\bigwedge_{i \in I} f_{i}$ and $\bigWedge_{i \in I} f_{i}$ by
\begin{align}
  (\bigwedge_{i \in I} f_{i})(x) & \eqdef \bigwedge_{i \in I}
  (f_{i}(x))\,,\;\;
  \quad
  (\bigWedge_{i \in I} f_{i})(x) \eqdef \bigvee_{\COND{x}{t}}
  \bigwedge_{i \in I} f_{i}(\omega_{L}(t))\,.\;\;
  \label{eq:pointwiseFormula}
\end{align}
Notice that $\bigwedge_{i \in i} f_{i}$ need not be \jc while
$\bigWedge_{i \in I} f_{i}$ is \jc and
$\bigWedge_{i \in I} f_{i} = \int(\bigwedge_{i \in i} f_{i})$ if $L$
is \cd, see Lemma~\ref{lemma:interior}.  Under the latter condition,
$\bigWedge_{i \in I} f_{i}$ is the infimum of 
$\set{f_{i} \mid i \in I} $ within the complete lattice $\SLatt(L,M)$.
The explicit description of the infimum given in
\eqref{eq:pointwiseFormula} can be exploited to prove
that 
$\CDLatt$ 
is closed under the monoidal operations inherited from $\SLatt$, see
e.g.  \cite{Rowe1988,HiggsRowe1989,EGHK2018}, thus it is
$\star$-autonomous \cite{Barr1979}. We expect the formula in
\eqref{eq:pointwiseFormula} also to be useful for computational
issues, \blue{see Ramirez et al., this collection}.

Coming back to the equational theory of the $\QL$, an important
consequence of $\CDLatt$ being $\star$-autonomous is that $\QL$ is \cd
if $L$ is \cd (the converse holds as well). Then, the following
obstacle arises towards finding representation theorems for \irl{s}
via the $\QL$:
\begin{corollary}
  If an \irl $\Q$ has an embedding into an \iqe of the form $\QL$, then
  $\Q$ is distributive.
\end{corollary}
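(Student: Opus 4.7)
The plan is to chain together three ingredients already in place in the paper: Corollary~\ref{cor:iff} (an \iqe of the form $\QL$ forces $L$ to be \cd), the fact that $\CDLatt$ is closed under the monoidal operations of \SLatt (so that $\QL$ is itself \cd when $L$ is), and the elementary observation that sublattices of distributive lattices are distributive.

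First I would unpack what the hypothesis gives us. An embedding of $\Q$ as an \irl into $\QL$ is in particular a homomorphism of lattice reducts that is injective; since the \irl signature includes $\wedge$ and $\vee$, such an embedding realises $\Q$ as a sublattice of $\QL$. Moreover, since the image is an \irl inside $\QL$, and $\QL$ is an \iqe by hypothesis, we may apply Corollary~\ref{cor:iff} to deduce that $L$ is a \cdlatt.

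Next, I would invoke the fact, recalled in Section~\ref{sec:eqTheory}, that $\CDLatt$ is closed under the monoidal operations inherited from \SLatt, and hence is $\star$-autonomous in its own right. In particular, the internal hom $\SLatt(L,L) = \QL$ of a \cdlatt $L$ with itself lies in $\CDLatt$, i.e.\ $\QL$ is \cd as a complete lattice. Complete distributivity implies ordinary distributivity, so $\QL$ is a distributive lattice.

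Finally, since $\Q$ is (via the embedding) isomorphic to a sublattice of the distributive lattice $\QL$, and distributivity is inherited by sublattices, $\Q$ is distributive. I do not foresee a real obstacle here: the only point requiring some care is making sure that an \irl embedding really yields a lattice embedding (it does, by inspection of the signature), and that the step ``$L$ \cd implies $\QL$ \cd'' is available—this is precisely the content of the sentence preceding the corollary in Section~\ref{sec:eqTheory}, which itself follows from $\CDLatt$ being closed under the monoidal structure of \SLatt.
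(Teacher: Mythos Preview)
Your proposal is correct and follows essentially the same route as the paper: from $\QL$ being an \iqe deduce that $L$ is \cd via Corollary~\ref{cor:iff}, then that $\QL$ is \cd via the closure of $\CDLatt$ under the monoidal operations of $\SLatt$, and conclude by transporting distributivity along the lattice embedding. Your write-up is in fact more careful than the paper's, which compresses all this into two lines and does not make explicit that an \irl embedding restricts to a lattice embedding.
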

Indeed, if $\QL$ is an \iqe, then $L$ is a \cdlatt and $\QL$ as
well. 
Thus, if $\Q$ has a lattice embedding into $\QL$, then $L$ is
distributive.

\section{Conclusions and future steps}

The research exposed in this paper tackles and solves a natural
problem encountered during our investigations of certain \qe{s} built
from \cc{s} \cite{2019-WORDS,CW,2018-RAMICS}. The problem asks to
characterize the \cc{s} whose quantale of \jc endomaps is
involutive. Every \cc is a \cdlatt and by now we know that every \cc
has this property; in particular, other properties of chains and
posets, such as self-duality, are not relevant.

The solution provided, building on
\cite[Proposition~\propno]{EGHK2018}, is as general as possible, in
two respects. On the one hand, an exact characterization of all the
\clatt{s}---not just the chains---$L$ for which $\QL$ is involutive
becomes available: these are the \cdlatt{s}; improving on
\cite[Proposition~\propno]{EGHK2018}, we argue that the choice of a
cyclic dualizing element does not matter.  In particular, the
characterization covers different kind of \iqe{s} known in the
literature, those discovered in our investigation of complete chains
and those known as the \rl{s} of weakening relations---arising from
the relational semantics of distributive linear logic. On the other
hand, we show that the \iqe structures on \cdlatt{s} are uniform,
yielding and \iq structure on the category of \cdlatt{s} and \jc
functions.

We have drawn several consequences from the observations developed,
among them, the fact that if an \iqe $\Q$ can be embedded into an
quantale of the form $\QL$, then it is distributive.  This fact calls
for a characterization of the \irl{s} embeddable into some $\QL$, a
research track that might require to or end up with determining the
variety of \irl{s} generated by the $\QL$.
A second research goal, that we might tackle in a close future,
demands to investigate the algebra developed in connection with the
continuous weak order \cite{CW} in the wider and abstract setting of
\cdlatt{s}. Let us recall that in \cite{CW} a surprising bijection was
established between two kind of objetcs, the maximal chains in the
cube lattice $[0,1]^{d}$ and the families
$\set{f_{i,j} \in \Q([0,1])\mid 1 \leq i < j \leq d}$ such that, for
$i < j < k$,
$f_{j,k} \circ f_{i,j}\leq f_{i,k} \leq f_{j,k} \oplus f_{i,j}$. So,
are there other surprising bijections if the interval $[0,1]$ is
replaced by an arbitrary \cdlatt, and if we move from the \iqe setting
to the multisorted setting of \iq{s}?

\ifams
\bigskip
\fi

\paragraph{Acknowledgment.} 
The author is thankful to Srecko Brlek, Claudia Muresan, and Andr\'e
Joyal for the fruitful discussions these scientists shared with him on
this topic during winter 2018.

\bibliographystyle{splncs04}
\bibliography{biblio}

\end{document}